\newtheorem{theorem}{Theorem}
\newtheorem{Proposition}{Proposition}
\newtheorem{Lemma}{Lemma}
\newtheorem{lemma}[Lemma]{$\mathbf{Lemma}$}
\newtheorem{proposition}[Proposition]{Proposition}
\begin{document}%%
\title{ {    On the     Spectral Efficiency and Security Enhancements of NOMA Assisted Multicast-Unicast Streaming }}

\author{ Zhiguo Ding, \IEEEmembership{Senior Member, IEEE}, Zhongyuan Zhao, \IEEEmembership{Member, IEEE}, Mugen Peng, \IEEEmembership{Senior Member, IEEE},
  and  H. Vincent Poor, \IEEEmembership{Fellow, IEEE}\thanks{
Z. Ding and H. V. Poor  are with the Department of
Electrical Engineering, Princeton University, Princeton, NJ 08544,
USA.   Z. Ding is also with the School of
Computing and Communications, Lancaster
University, LA1 4WA, UK. Z. Zhao and M. Peng are  with the Key Laboratory of Universal Wireless Communications (Ministry
of Education), Beijing University of Posts and Telecommunications, Beijing,
China.
 }\vspace{-0.5em}} \maketitle
\begin{abstract}
 This paper considers the application of  non-orthogonal multiple access (NOMA) to a multi-user network with mixed multicasting and unicasting traffic. The proposed design of beamforming and power allocation  ensures that the unicasting performance is improved while maintaining the reception reliability of multicasting. Both analytical and simulation results are provided to demonstrate that the use of the NOMA assisted multicast-unicast scheme yields a significant improvement in spectral efficiency compared to orthogonal multiple access (OMA) schemes which realize  multicasting and unicasting services  separately. Since   unicasting messages are broadcasted to all the users, how   the use of NOMA can prevent those multicasting receivers intercepting the unicasting messages is also investigated, where it is shown that the secrecy unicasting rate achieved by NOMA is always larger than or equal to that of OMA. This security gain is mainly due to the fact that the  multicasting messages can be used as  jamming signals to prevent  potential eavesdropping  when the multicasting and unicasting messages are superimposed together following the NOMA principle.
\end{abstract}\vspace{-1em}
\section{Introduction}
Non-orthogonal multiple access (NOMA) has been recognized as an important enabling technology to realize   the challenging requirements of the fifth generation  (5G) mobile networks, such as massive connectivity, high data speed and low latency. The key idea of NOMA is to exploit the power domain for multiple access  and serve multiple users at the same time/frequency/code \cite{6692652, Nomading,7263349}. The two-user downlink special case of NOMA has been included in the 3rd Generation Partnership Project (3GPP) Long Term Evolution Advanced \cite{3gpp1}. In addition to its applications in cellular networks, NOAM has also been applied to other types of wireless networks, because of its superior spectral efficiency. For example, a variation of NOMA, termed Layer Division Multiplexing (LDM), has been proposed to the next general digital TV standard ATSC 3.0 \cite{7378924}.

Conventionally NOMA has been applied to unicasting transmission, where an information bearing message sent by the base station is  intended to  one receiver only. To these unicasting scenarios, various NOMA designs combined with multiple-input multiple-output (MIMO), millimeter-wave communications, and cooperative relaying have been developed \cite{7095538,Zhiguo_mmwave,7527682}.  Recently the application of NOMA to multicasting transmission has also attracted some attention, where one information bearing message is intended to  multiple users. For example, in \cite{7015589}, two types of messages are sent by the base station, where the high priority type of data is to be decoded by both the users, and the low priority type is   intended to one receiver only.

This paper is to consider the application of NOMA to a multi-user network with mixed multicasting and unicasting traffic, where the base station transmits two types of data streams, one for multicasting and one for unicasting. The study of this mixed multicast-unicast streaming is motivated by an important observation that in a multicasting  network, spatial degrees of freedom cannot be fully used. For example, a base station uses a beamformer to broadcast a multicasting message. In     rich scattering indoor environments, i.e., users' channels are independent from each other, it is inevitable that this beamformer is good to some users, but not so to  the others. Motivated by this inefficiency, in this paper, unicasting transmission is superimposed with multicasting following the NOMA principle, where the excess spatial degrees of freedom can be used to improve the performance of unicasting, while maintaining the reliability of multicasting.

In particular, the contribution of this paper is two-fold:
\begin{itemize}
\item The spectral efficiency of the proposed NOMA assisted multicast-unicasting scheme is characterized. Particularly, it is first shown that the proposed transmission scheme achieves the same multicasting performance as orthogonal multiple access (OMA) schemes which realize multicasting and unicasting services  separately. Then the reception reliability of NOMA unicasting is studied by using the outage probability as the criterion, and the unicasting  performance gain of NOMA over OMA is also investigated. The developed analytical and simulation results show that the use of NOMA can bring significant performance gains over OMA, and also provide specific guidelines for the design of user scheduling for further performance improvements.

\item Since the unicasting message is broadcasted to all the users, how well the use of NOMA can prevent those multicasting receivers intercepting the unicasting message is also investigated. First it is shown that the secrecy unicasting rate achieved by NOMA is always larger than or equal to that of OMA, and then the secrecy outage probability of NOMA unicasting is studied. Again the developed analytical results provide insights about how to design user scheduling in order to further enlarge the performance gap between NOMA and OMA. It is worth pointing out  that the reason for NOMA assisted multicast-unicast streaming  to achieve  better secrecy performance than OMA is similar to the idea of interference masking in conventional physical layer security networks \cite{6827055}. Particularly, the multicasting message can be viewed as a jamming signal, and this jamming signal effectively prevents those potential eavesdroppers with weak channel conditions to intercept the multicasting message.

\end{itemize}

\section{System Model}
Consider a downlink communication scenario with one base station communicating with $K$ users. The base station is equipped with $M$ antennas and each user has a single antenna. In this paper, we focus on the combination of multicasting and unicasting streaming, i.e., the base station has two messages to send. The multicasting message is intended to all the users, whereas the unicasting message is to be received by a particular user.

    In particular, denote the multicasting message sent from the base station by ${s}_M$. Without loss of generality, assume that the unicasting message, denoted by $ {s}_U$, is intended to user $1$. Using conventional OMA, two orthogonal resource blocks, such as time slots or frequency channels, are needed to deliver   the multicasting and unicasting messages separately. The use of the NOMA principle ensures that the multicast and unicast streaming services can be delivered within a single resource block.%  Since both messages are broadcasted simultaneously, it is important to ensure that user $k$, $k\neq 1$, will not intercept the message sent to user $1$, a physical layer security issue not been considered in \cite{Zhiguo_iot}.

 Particularly, with the application of the NOMA principle, the base station will transmit the following vector:
\begin{align}
\mathbf{x} = \mathbf{w}\left(\alpha_M s_{M}+ \alpha_U s_{U}  \right),
\end{align}
where $\mathbf{w}$ is an $M\times 1$ beamforming vector,  $\alpha_M$ and $\alpha_U$ are the  power allocation coefficients which  are designed to satisfy $\alpha_M^2+\alpha_{U}^2=1$.

 Following the MIMO-NOMA concept proposed in \cite{Zhiguo_iot}, we design the beamforming vector to artificially create the difference between the users' effective channel gains. Particularly, $\mathbf{w}$ is designed to improve the effective channel gain of user $1$, i.e.,
\begin{align}\label{bemforming}
\mathbf{w}=\frac{\mathbf{h}_1^H}{\sqrt{\mathbf{h}_1\mathbf{h}_1^H}},
\end{align}
where $\mathbf{h}_k$ denotes the $1\times M$ channel vector of user $k$. As a result, the NOMA principle can be applied even if the users have similar channel conditions. Note that the results developed in this paper about the spectral efficiency enhancement are new compared to those in \cite{Zhiguo_iot} due to the multi-user setup. In addition,  the security issue was also  not considered in the  existing work about MIMO-NOMA.

By using the beamforming design shown in \eqref{bemforming}, user $1$'s observation is given by
\begin{align}
{y}_1  &=\mathbf{h}_1\mathbf{x} +n_1  = \sqrt{\mathbf{h}_1\mathbf{h}_1^H} \left(\alpha_M s_{M}+ \alpha_U s_{U}  \right)+n_1,
\end{align}
where $ {n}_1$ is the additive Gaussian noise. Similar to a ``strong user" in conventional NOMA networks, user $1$ will carry out successive interference cancelation (SIC), i.e., $s_M$  is detected first and then subtracted from the observation before $s_U$ is decoded. Therefore, the signal-to-interference-plus-noise ratio (SINR) for user $1$ to detect $s_M$  is given by
\begin{align}\label{SINR}
{\text{SINR}}_{1} = \frac{\alpha_M^2 z_1}{\alpha_U^2z_1+\frac{1}{\rho} },
\end{align}
where $z_1=|\mathbf{h}_1|^2$ and $\rho$ is the transmit signal-to-noise ratio (SNR).   After $s_M$ is detected successfully, user $1$ first removes this message from its observation and then detects the unicasting message, $s_U$, with the following SNR:
\begin{align}\label{SNR}
\text{SNR}_{1} =\rho \alpha_U^2 z_1.
\end{align}

 User $k$'s observation, $2\leq k \leq K$,     is given by
\begin{align}
{y}_k  &=\mathbf{h}_k\mathbf{x} +n_k  = \frac{\mathbf{h}_k\mathbf{h}_1^H}{\sqrt{\mathbf{h}_1\mathbf{h}_1^H}} \left(\alpha_M s_{M}+ \alpha_U s_{U}  \right)+n_k.
\end{align}
Similar to a ``weak user" in  conventional NOMA networks, user $k$ detects $s_M$ by treating $s_U$ as noise, which means the SINR for detecting $s_M$ at user $k$ is given by
\begin{align}
\text{SINR}_{k} = \frac{ \alpha_M^2 z_{k} }{ \alpha_U^2z_{k}+\frac{1}{\rho}},
\end{align}
where $z_{k}=\frac{|\mathbf{h}_k\mathbf{h}_1^H|^2}{{\mathbf{h}_1\mathbf{h}_1^H}} $. Since $\mathbf{h}_1$ is independent from $\mathbf{h}_k$ and  a uniform transformation of a complex Gaussian vector is still complex Gaussian distributed, the probability density functions (pdfs) of $z_k$ are given by
 \begin{align}\label{pdf1}
f_{z_{k}}(z) =e^{-z},
 \end{align}
 for $2\leq k\leq K$,
 and
 \begin{align}\label{pdf2}
 f_{z_1}(z) = \frac{z^{M-1}}{(M-1)!}e^{-z},
 \end{align}
 respectively.

\subsection{Power allocation  to guarantee multicasting}
The proposed beamforming vector is helpful to increase the difference between the users' effective channel gains, which is ideal for the application of NOMA. However, it is important to design the power allocation policy in order to ensure that multicasting is delivered successfully. In this paper,   the cognitive radio inspired power allocation policy \cite{Zhiguo_CRconoma} is used by treating $s_M$ as the message to be broadcasted to the primary users, which means that $s_M$ is assigned  with  a higher priority compared to $s_U$. Particularly, to ensure all the users to receive the multicasting message $s_M$ correctly, we impose the following constraint on the power allocation coefficients:
\begin{align}\label{constraint NOMA}
\log(1+{\text{SINR}}_{k})\geq R_{M}
\end{align}
for all $k\in \{1, \cdots, K\}$, where $R_M$ is the targeted data rate for multicasting. Therefore the power allocation coefficient can be set as follows:
\begin{align}
\alpha_U^2 = &\max \left\{0, \min  \left\{\frac{z_{k}-\frac{\epsilon_{M}}{\rho}}{z_{k}(1+\epsilon_{M})}, 1\leq k \leq K \right\} \right\},
\end{align}
where $\epsilon_M=2^{R_M}-1$.
It is worth pointing out that $z_1$ and $z_k$, $k>1$, are distributed differently as shown in \eqref{pdf1} and \eqref{pdf2}.

As a result, the unicasting data rate  at user $1$ achieved by the NOMA scheme  is given by
\begin{align}\label{xxxx1}
R_{U,1} =&\log\left(1+\rho z_1 \max \left\{0, \right.\right. \\ \nonumber &\left.\left.\min  \left\{\frac{z_{k}-\frac{\epsilon_{M}}{\rho}}{z_{k}(1+\epsilon_{M})}, 1\leq k \leq K \right\} \right\}\right).
\end{align}
It is important to point out that, in \eqref{xxxx1},  the possible failure of the first SIC step has already  been taken into the consideration. For example, as shown in \eqref{xxxx1}, $R_{U,1}$ can be zero. This case will happen if one of the users in the network experiences deep fading. As a result, the base station allocates all the power for multicasting, and the rate for unicasting will be zero.

Similarly, the eavesdropping rate for user $k$, $k>1$, to intercept $S_U$ is given by
\begin{align}
R_{U,k} =&\log\left(1+\rho z_k \max \left\{0, \right.\right. \\ \nonumber &\left.\left.\min  \left\{\frac{z_{k}-\frac{\epsilon_{M}}{\rho}}{z_{k}(1+\epsilon_{M})}, 1\leq k \leq K \right\} \right\}\right),
\end{align}
for $2\leq k \leq K$. Ideally the difference between $R_{U,1}$ and $R_{U,k}$, $k>1$, should be kept as large as possible, which makes user $k$, $k>1$, difficult to decode $s_U$. This security  issue will be studied in Section \ref{section security}.

\subsection{A sophisticated  OMA-based benchmarking scheme  }\label{subsection beamforming}
There are two types of OMA  transmission schemes which can be used as benchmarking schemes.
 One is based on the use of predefined orthogonal bandwidth blocks, such as time slots with fixed durations or frequency channels with fixed bandwidth. The other is to dynamically adjust  the amount of bandwidth resources allocated for multicasting and unicasting   according to the users' channel conditions. In this paper, we use the latter as a benchmark since it outperforms  the former. But it is important to point out that this sophisticated OMA scheme is difficult to   implement since high-cost circuits are needed to support the OMA scheme using time slots (frequency channels) with arbitrary durations (bandwidth).

Without loss of generality, time division multiple access (TDMA) is used as a representative of OMA. Similar to the cognitive radio inspired NOMA power allocation policy, in OMA, a portion of the whole time slot, denoted by $\gamma$, $0< \gamma\leq1$, is allocated to transmit the multicasting message, $s_M$.  If $\gamma\neq 1$, the remaining time will be used to transmit the unicasting message, $s_U$. During the multicasting phase, the base station uses $\mathbf{p}=\frac{\mathbf{h}_1^H}{\sqrt{\mathbf{h}_1\mathbf{h}_1^H}}$ as the beamforming vector for multicasting. Note that the base station can  use other choices, such as equal gain combining based beamforming, i.e., $\mathbf{p}= \frac{1}{\sqrt{M}}\begin{bmatrix}1 &\cdots & 1  \end{bmatrix}^T$ or a randomly chosen vector. The simulation results provided in Section \ref{section simulation} demonstrate that different choices of beamforming result in similar  performance. It is worth pointing out that   a beamforming choice of $\mathbf{p}=\frac{\mathbf{h}_1^H}{\sqrt{\mathbf{h}_1\mathbf{h}_1^H}}$ slightly outperforms the other two, which means that this is a choice preferred by OMA.

      Similar to \eqref{constraint NOMA}, the requirement that all the users can receive $s_M$ results in the following constraint on the time allocation coefficient:
\begin{align}\label{constraint NOMA}
\gamma\log(1+\rho \min\{|\mathbf{h}_k\mathbf{p}|^2, 1\leq k\leq K\})\geq R_{M}.
\end{align}
     Therefore the time allocation coefficient can be set as follows:
\begin{align}
\gamma=\min\left\{1, \frac{R_{M}}{\log(1+\rho \min\{|\mathbf{h}_k\mathbf{p}|^2, 1\leq k\leq K\})}\right\}.
\end{align}
The remaining $(1-\gamma)$ duration is used for unicasting by again employing  the precoding vector $\mathbf{p}=\frac{\mathbf{h}_1^H}{\sqrt{\mathbf{h}_1\mathbf{h}_1^H}}$, which means that the following data rate is achievable for unicasting at user $k$:
\begin{align}
\bar{R}_{U,k} =& \log\left(1+\rho z_k \right)\left(1-\right. \\ \nonumber &\left.\min\left\{1, \frac{R_{M}}{\log(1+\rho \min\{|\mathbf{h}_k\mathbf{p}|^2, 1\leq k\leq K\})}\right\}\right).
\end{align}
Again note that $\bar{R}_{U,k}$ can be zero, if there is a user experiencing deep fading and all the time is used for multicasting.

\section{Spectral Efficiency Enhancements Achieved by NOMA}\label{section spectral}
Note that only the unicasting performance is focused in this paper, since both the NOMA and OMA schemes achieve the same multicasting performance, as illustrated in the following.
\begin{proposition}\label{propostion 1}
The outage probability for NOMA multicasting is the same as that for OMA multicasting.
\end{proposition}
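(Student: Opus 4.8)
The plan is to show that both schemes declare a multicasting outage on exactly the same event, namely $\{\min_{1\leq k\leq K} z_k < \epsilon_M/\rho\}$, and then to observe that the effective channel gains $z_k$ entering the two outage conditions are the same random variables, so the two probabilities must coincide.

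First I would reduce the NOMA multicasting outage to a threshold event on $\min_k z_k$. Imposing $\log(1+\mathrm{SINR}_k)\geq R_M$ is the same as $\mathrm{SINR}_k\geq \epsilon_M$, which, after substituting $\alpha_M^2=1-\alpha_U^2$ into \eqref{SINR}, rearranges to $\alpha_U^2\leq \frac{z_k-\epsilon_M/\rho}{z_k(1+\epsilon_M)}$. Hence all $K$ constraints can be met simultaneously by a nonnegative $\alpha_U^2$ exactly when $\min_k \frac{z_k-\epsilon_M/\rho}{z_k(1+\epsilon_M)}\geq 0$, i.e. when $\min_k z_k\geq \epsilon_M/\rho$; and since this quantity is at most $\tfrac{1}{1+\epsilon_M}<1$, the chosen $\alpha_U^2\in[0,1)$ is always a feasible power split. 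When instead $\min_k z_k<\epsilon_M/\rho$ the policy clips $\alpha_U^2$ to $0$, but then the worst user observes $\mathrm{SINR}=\rho\min_k z_k<\epsilon_M$, so $s_M$ cannot be decoded. Thus NOMA is in outage if and only if $\min_k z_k<\epsilon_M/\rho$.

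Next I would carry out the analogous reduction for the OMA benchmark. Because the multicasting beamformer is $\mathbf{p}=\mathbf{h}_1^H/\sqrt{\mathbf{h}_1\mathbf{h}_1^H}$, which equals $\mathbf{w}$, one has $|\mathbf{h}_k\mathbf{p}|^2=|\mathbf{h}_k\mathbf{h}_1^H|^2/(\mathbf{h}_1\mathbf{h}_1^H)=z_k$ for $k>1$ and $|\mathbf{h}_1\mathbf{p}|^2=\mathbf{h}_1\mathbf{h}_1^H=z_1$, so the OMA rule is driven by precisely the same quantities $z_k$. If $\min_k z_k\geq \epsilon_M/\rho$ then $R_M/\log(1+\rho\min_k z_k)\leq 1$, and the chosen $\gamma$ achieves $\gamma\log(1+\rho\min_k z_k)=R_M$, so multicasting succeeds; if $\min_k z_k<\epsilon_M/\rho$ then the clip $\gamma=1$ is active and $\gamma\log(1+\rho\min_k z_k)=\log(1+\rho\min_k z_k)<R_M$, so it fails. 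Therefore OMA is in outage if and only if $\min_k z_k<\epsilon_M/\rho$ as well.

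Having matched the two outage events, I would conclude that the probabilities are equal; if a closed form is desired, the common value is $\Pr(\min_k z_k<\epsilon_M/\rho)=1-\Pr(z_1\geq \epsilon_M/\rho)\prod_{k=2}^{K}\Pr(z_k\geq \epsilon_M/\rho)$, which can be written out from the marginals \eqref{pdf1}–\eqref{pdf2} using the independence of the $z_k$. I do not anticipate a real obstacle: the argument is essentially algebraic, and the only points needing care are that the clippings ($\max\{0,\cdot\}$ in NOMA and $\min\{1,\cdot\}$ in OMA) activate at precisely the same threshold $\min_k z_k=\epsilon_M/\rho$, and that this boundary has measure zero for continuous $z_k$ and so does not affect either probability.
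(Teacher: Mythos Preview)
Your argument is correct and follows essentially the same approach as the paper: both reduce the NOMA multicasting outage to the event $\{\alpha_U^2=0\}=\{\min_k z_k<\epsilon_M/\rho\}$ and the OMA outage to $\{\gamma=1\}=\{\log(1+\rho\min_k z_k)<R_M\}$, then observe these coincide. You are simply more explicit than the paper about why $|\mathbf{h}_k\mathbf{p}|^2=z_k$, why the clipped $\alpha_U^2$ is feasible, and why the boundary has measure zero, all of which the paper subsumes under ``some algebraic manipulations.''
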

\begin{proof}
The outage probability for NOMA multicasting is given by
\begin{align}
\mathrm{P}^o_M &\triangleq \mathrm{P}\left( \log(1+\text{SINR}_k)<R_M, 1\leq k \leq K  \right) \\ \nonumber & = \mathrm{P}\left(\alpha_U^2=0\right) =\mathrm{P}\left(\min\left\{z_k, 1\leq k \leq K \right\}< \frac{\epsilon_M}{\rho}\right) .
\end{align}
Similarly, for the OMA scheme, its multicasting outage probability is expressed as follows:
\begin{align}
\mathrm{P}^n_M  &= \mathrm{P}\left(\gamma=1\right)\\ \nonumber & =\mathrm{P}\left( \frac{R_{M}}{\log(1+\rho \min\{z_k, 1\leq k\leq K\})}>1\right) .
\end{align}
With some algebraic manipulations, it is straightforward to show that $\mathrm{P}^o_M=\mathrm{P}^n_M$, and the proof is complete.
\end{proof}
Therefore, in the remaining of this paper, we will focus on the unicasting performance. Particularly, in this section, two criteria will be used to study the spectral efficiency of the proposed NOMA based unicasting scheme. One is the outage probability achieved by the proposed scheme, i.e., $\mathrm{P}(R_{U,1}< R_U)$, where $R_U$ denotes the targeted data rate for unicasting. The other is the comparison between two instantaneous unicasting rates achieved by NOMA and OMA, i.e., $\mathrm{P}(R_{U,1}>\bar{R}_{U,1})$, .

\subsection{Characterizing the Unicasting Outage Probability}
Recall that the unicasting outage probability for the NOMA scheme can be written as follows:
\begin{align}
\mathrm{P}_N = &\mathrm{P}\left( z_1 \max \left\{0, \right.\right. \\ \nonumber &\left.\left.\min  \left\{\frac{z_{k}-\frac{\epsilon_{M}}{\rho}}{z_{k}(1+\epsilon_{M})}, 1\leq k \leq K \right\} \right\}  < \frac{\epsilon_{U}}{\rho}\right),
\end{align}
where $\epsilon_U=2^{R_U}-1$. The following theorem provides a closed-form expression for this outage probability.

\begin{theorem}
The unicasting outage probability  achieved by the proposed NOMA transmission scheme can be approximated as follows:
\begin{align} \nonumber
\mathrm{P}_N &\approx 1 - \frac{\Gamma\left(M, \frac{\epsilon_M}{\rho}\right)}{(M-1)!} e^{-\frac{(K-1)\epsilon_M}{\rho}}  +\frac{\gamma(M, K\phi) - \gamma(M, \frac{K\epsilon_M}{\rho})}{(M-1)!K^M} \\ \nonumber &
+\sum^{N_a}_{i=1}w_i\frac{b-a}{2}\left[
 F_{\tilde{z}_1}\left( \frac{b-a}{2}x_i+\frac{a+b}{2} \right)  \right.\\ \nonumber &\left.- F_{\tilde{z}_1}\left(\frac{1}{\psi}-\frac{\epsilon_{M}}{\rho\psi}\left(\frac{b-a}{2}x_i+\frac{a+b}{2}\right) \right) \right]\\   &\times  f_{\tilde{u}}\left(\frac{b-a}{2}x_i+\frac{a+b}{2}\right) \sqrt{1-x_i^2}
,
\end{align}
where $\phi=\frac{\epsilon_{M}}{\rho}+\frac{\epsilon_{U}(1+\epsilon_{M})}{\rho}$, $\psi=\frac{\epsilon_{U}(1+\epsilon_{M})}{\rho}$, $a=\frac{1}{\psi\left(1+\frac{\epsilon_{M}}{\rho\psi}\right)}$, $b=\frac{\rho}{\epsilon_M}$, $x_i =\cos\left(\frac{2i-1}{2N_a}\pi\right)$, $w_i=\frac{\pi}{N_a}$, $
F_{\tilde{z}_1}(z) = \frac{\Gamma(M, \frac{1}{z})}{(M-1)!}$,  $f_{\tilde{u}}(x) = \frac{1-K}{x^2}e^{-\frac{K-1}{x}}$, $N_a$ denotes the parameter of the Chebyshev-Gauss approximation,  $\Gamma(\cdot)$ and $\gamma(\cdot)$ denote the upper and lower incomplete gamma functions, respectively.
\end{theorem}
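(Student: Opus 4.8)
The plan is to first collapse the nested $\min/\max$ defining $\alpha_U^2$ into a single auxiliary variable, and then split the outage event into three disjoint pieces that match the four terms of the claimed expression one-for-one. The key reduction is that the map $g(z) = \frac{z-\epsilon_M/\rho}{z(1+\epsilon_M)} = \frac{1}{1+\epsilon_M}\left(1-\frac{\epsilon_M/\rho}{z}\right)$ is strictly increasing in $z$, so $\min_{1\le k\le K} g(z_k) = g(z_{\min})$ with $z_{\min} = \min\{z_1,\dots,z_K\}$, and therefore $\alpha_U^2 = \max\{0, g(z_{\min})\}$. I would then introduce $u = \min\{z_2,\dots,z_K\}$, which by \eqref{pdf1} is exponential with rate $(K-1)$ and independent of $z_1$ (whose pdf is \eqref{pdf2}), so that $z_{\min} = \min\{z_1, u\}$ and the two distributions can be integrated separately.

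Next I would partition according to the event structure. Whenever $z_{\min} < \epsilon_M/\rho$ we have $\alpha_U^2 = 0$ and outage is certain; by independence this contributes $\mathrm{P}(z_1 < \epsilon_M/\rho \text{ or } u < \epsilon_M/\rho) = 1 - \frac{\Gamma(M, \epsilon_M/\rho)}{(M-1)!}\,e^{-(K-1)\epsilon_M/\rho}$, which is exactly the first two terms. On the complementary event $z_{\min} \ge \epsilon_M/\rho$ I would split further by which variable attains the minimum. If $z_1 \le u$ then $\alpha_U^2 = g(z_1)$, and the outage condition $z_1 g(z_1) < \epsilon_U/\rho$ collapses to the linear condition $z_1 < \phi$; integrating $f_{z_1}(t)\,\mathrm{P}(u\ge t)$ over $t\in[\epsilon_M/\rho,\phi]$ gives $\int_{\epsilon_M/\rho}^{\phi}\frac{t^{M-1}}{(M-1)!}e^{-Kt}\,dt$, and the rescaling $s=Kt$ produces exactly the third term $\frac{\gamma(M, K\phi) - \gamma(M, K\epsilon_M/\rho)}{(M-1)!\,K^M}$.

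The remaining and most delicate piece is the subevent $u < z_1$ with $u \ge \epsilon_M/\rho$, where $\alpha_U^2 = g(u)$ and the outage condition becomes $z_1 < \psi\,\frac{u}{u-\epsilon_M/\rho}$; together with $z_1 > u$ this confines $z_1$ to the interval $\left(u,\ \psi\frac{u}{u-\epsilon_M/\rho}\right)$, which is nonempty precisely when $u \in [\epsilon_M/\rho,\phi)$. Conditioning on $u$ yields the single integral $\int_{\epsilon_M/\rho}^{\phi}\big[F_{z_1}(\frac{\psi u}{u-\epsilon_M/\rho}) - F_{z_1}(u)\big]f_u(u)\,du$, whose integrand has no elementary antiderivative because of the argument $\frac{\psi u}{u-\epsilon_M/\rho}$. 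Here I would apply the reciprocal change of variable $t = 1/u$ (equivalently passing to $\tilde z_1 = 1/z_1$ and $\tilde u = 1/u$, whose laws are $F_{\tilde z_1}$ and $f_{\tilde u}$ as stated, the orientation factor of the substitution being absorbed into $f_{\tilde u}$). One checks that $a = 1/\phi$ and $b = \rho/\epsilon_M$, so the domain becomes $[a,b]$, and, crucially, that the reciprocal linearizes the troublesome argument, since $\frac{u-\epsilon_M/\rho}{\psi u} = \frac{1}{\psi} - \frac{\epsilon_M}{\rho\psi}t$; using $F_{\tilde z_1}(t) = \bar F_{z_1}(1/t)$ then turns the bracket into $F_{\tilde z_1}(t) - F_{\tilde z_1}(\frac{1}{\psi} - \frac{\epsilon_M}{\rho\psi}t)$. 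The last step is to evaluate $\int_a^b \big[F_{\tilde z_1}(t) - F_{\tilde z_1}(\frac{1}{\psi} - \frac{\epsilon_M}{\rho\psi}t)\big]f_{\tilde u}(t)\,dt$ by $N_a$-point Chebyshev–Gauss quadrature with nodes $x_i$ and weights $w_i$, giving the summation term.

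I would flag the $u < z_1$ subevent as the main obstacle: it is the only region where $z_1$ enters both through the minimum (hence through $\alpha_U^2$) and as the standalone SNR prefactor, so the two appearances cannot be decoupled and the conditional outage interval for $z_1$ depends nonlinearly on $u$. This nonlinearity is exactly what forces the reciprocal substitution and the subsequent numerical quadrature, which is the sole source of approximation in the theorem; verifying that $t=1/u$ simultaneously sends the domain to $[a,b]$ and linearizes $\frac{\psi u}{u-\epsilon_M/\rho}$ is the crucial simplification, while the first three terms remain exact.
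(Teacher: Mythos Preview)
Your proposal is correct and follows essentially the same route as the paper: the same monotonicity observation for $g$, the same auxiliary variable $u=\min\{z_2,\dots,z_K\}$, the same three-way split $Q_1+Q_2+Q_3$, and the same reciprocal substitution $\tilde u=1/u$, $\tilde z_1=1/z_1$ followed by Chebyshev--Gauss quadrature for the third piece. Your handling of $Q_2$ (conditioning directly on $z_1$ and integrating $f_{z_1}(t)\,\mathrm{P}(u\ge t)$ over $[\epsilon_M/\rho,\phi]$) is a touch more economical than the paper's case split on $\{u\lessgtr\phi\}$, but the two computations are equivalent and yield the same closed form.
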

\begin{proof}
Recall that $z_k$, $2\leq k \leq K$, are independent and identically distributed, and   $z_1$ is independent from $z_k$, $2\leq k \leq K$. Another important   fact is that    $f(y) \triangleq  \frac{y-\frac{\epsilon_{M}}{\rho}}{y(1+\epsilon_{M})}$ is a monotonically increasing function of $y$, for $y>0$, which can be verified as follows:
\begin{align}
f'(y) = \frac{\epsilon_M}{y^2(1+\epsilon_M)\rho}>0,
\end{align}
for $y>0$.  Therefore, define $u=\min\{z_2, \cdots, z_K\}$ whose pdf is given by \cite{David03}
\begin{align}
f_{u} (x) = (K-1)e^{-(K-1)x},
\end{align}
and the outage probability can be expressed as follows:
\begin{align}
\mathrm{P}_N = &\mathrm{P}\left( z_1 \max \left\{0, \right.\right. \\ \nonumber &\left.\left.\min  \left\{\frac{z_{1}-\frac{\epsilon_{M}}{\rho}}{z_{1}(1+\epsilon_{M})}, \frac{u-\frac{\epsilon_{M}}{\rho}}{u(1+\epsilon_{M})} \right\} \right\}  < \frac{\epsilon_{U}}{\rho}\right),
\end{align}
which can be further separated into three terms as follows:
\begin{align}\nonumber
\mathrm{P}_N = &\mathrm{P}\left(\min \{z_1, u\}<\frac{\epsilon_M}{\rho}\right) +\mathrm{P}\left( z_1  \right. \\ \nonumber &\left. \times \min  \left\{\frac{z_{1}-\frac{\epsilon_{M}}{\rho}}{z_{1}(1+\epsilon_{M})}, \frac{u-\frac{\epsilon_{M}}{\rho}}{u(1+\epsilon_{M})}   \right\}  < \frac{\epsilon_{U}}{\rho}\right)\\ \label{threem terms}  =&\underset{Q_1}{\underbrace{\mathrm{P}\left(\min \{z_1, u\}<\frac{\epsilon_M}{\rho}\right)}} \\ \nonumber &+\underset{Q_2}{\underbrace{\mathrm{P}\left(z_1>\frac{\epsilon_M}{\rho},z_1<u,      \frac{z_{1}-\frac{\epsilon_{M}}{\rho}}{(1+\epsilon_{M})}    < \frac{\epsilon_{U}}{\rho}\right)}} \\ \nonumber &+\underset{Q_3}{\underbrace{\mathrm{P}\left(u>\frac{\epsilon_M}{\rho}, z_1>u, z_1   \frac{u-\frac{\epsilon_{M}}{\rho}}{u(1+\epsilon_{M})}      < \frac{\epsilon_{U}}{\rho}\right)}} .
\end{align}

The first term in the above expression can be  found as follows:
\begin{align}\label{equ1}
Q_1 &= 1- \mathrm{P}\left(\min \{z_1, u\}>\frac{\epsilon_M}{\rho}\right)\\ \nonumber & =1 - \frac{\Gamma\left(M, \frac{\epsilon_M}{\rho}\right)}{(M-1)!} e^{-\frac{(K-1)\epsilon_M}{\rho}}   .
\end{align}.

The second term in \eqref{threem terms} can be calculated as follows:
\begin{align}
Q_2 =&\mathrm{P}\left(z_1>\frac{\epsilon_M}{\rho}, z_1<u,    z_{1}     < \phi\right)\\ \nonumber =&\mathrm{P}\left(z_1>\frac{\epsilon_M}{\rho}, z_1<u < \phi\right)  +\mathrm{P}\left(z_1>\frac{\epsilon_M}{\rho}, z_1    < \phi<u\right).
\end{align}

By using the pdfs of $u$ and $z_1$ in \eqref{pdf1} and \eqref{pdf2},  $Q_2$ can be found as follows:
\begin{align}\nonumber
Q_2  =&\int^{\phi}_{\frac{\epsilon_M}{\rho}}\int^{\phi}_{x} f_u(y)dy f_{z_1}(x)dx  \\ \nonumber &+\mathrm{P}\left(\frac{\epsilon_M}{\rho}<z_1    < \phi\right)\mathrm{P}\left( u>\phi\right)\\ \nonumber =& \int^{\phi}_{\frac{\epsilon_M}{\rho}} \left( e^{-(K-1)x}-e^{-(K-1)\phi}   \right) f_{z_1}(x)dx \\ \nonumber &+ \frac{\gamma(M, \phi)-\gamma(M,\frac{\epsilon_M}{\rho})}{(M-1)!}  e^{-(K-1)\phi}.
\end{align}
With some algebraic manipulations, we can find $Q_2$ in the following closed-form expression:
\begin{align}\label{equ2}
Q_2   %=&   (1-e^{-\phi})^{K-1}\frac{\gamma(M, \phi)-\gamma(M,\frac{\epsilon_M}{\rho})}{(M-1)!} - \sum^{K-1}_{m=0}{K-1 \choose m} (-1)^m \\ \nonumber &\times \int^{\phi}_{\frac{\epsilon_M}{\rho}} e^{-mx} f_{z_1}(x)dx  + \frac{\gamma(M, \phi)-\gamma(M,\frac{\epsilon_M}{\rho})}{(M-1)!} \left[1-(1-e^{-\phi})^{K-1}\right]\\ \nonumber
=&\frac{\gamma(M, K\phi) - \gamma(M, \frac{K\epsilon_M}{\rho})}{(M-1)!K^M}.
\end{align}

The last term in \eqref{threem terms} can be expressed as follows:
\begin{align}\nonumber
Q_3=&\mathrm{P}\left( z_1>u, z_1   \frac{u-\frac{\epsilon_{M}}{\rho}}{u(1+\epsilon_{M})}      < \frac{\epsilon_{U}}{\rho}, u>\frac{\epsilon_M}{\rho}\right) \\ \nonumber=&
\mathrm{P}\left( \frac{1}{z_1}<\frac{1}{u},    1-\frac{\epsilon_{M}}{u\rho}      < \frac{\epsilon_{U}(1+\epsilon_{M})}{z_1\rho}, u>\frac{\epsilon_M}{\rho}\right).
\end{align}
 Furthermore, define $\tilde{z}_1=\frac{1}{z}$ and $\tilde{u}=\frac{1}{u}$. Therefore the CDF of $\tilde{z}_1$ and the pdf of $\tilde{u}$ are  $F_{\tilde{z}_1}(z)$ and
$
f_{\tilde{u}}(x)$ defined in the theorem, respectively. Therefore, the factor, $Q_3$, can be rewritten as follows:
\begin{align}\nonumber
Q_3 =&
\mathrm{P}\left( \tilde{z}_1<\tilde{u},    1-\frac{\epsilon_{M}}{\rho}\tilde{u}      < \psi\tilde{z}_1,\tilde{z}_1<\frac{\rho}{\epsilon_M},\tilde{u}<\frac{\rho}{\epsilon_M}\right)\\ \nonumber =&
\mathrm{P}\left( \tilde{z}_1<\tilde{u},   \tilde{z}_1> \frac{1}{\psi}-\frac{\epsilon_{M}}{\rho\psi}\tilde{u}       ,\tilde{z}_1<\frac{\rho}{\epsilon_M},\tilde{u}<\frac{\rho}{\epsilon_M}\right)
\\ \nonumber =&
\mathrm{P}\left(\frac{1}{\psi}-\frac{\epsilon_{M}}{\rho\psi}\tilde{u}  <  \tilde{z}_1< \tilde{u} ,\tilde{u}<\frac{\rho}{\epsilon_M}\right).
\end{align}
Note that the constraint of $\frac{1}{\psi}-\frac{\epsilon_{M}}{\rho\psi}\tilde{u} < \tilde{u}$ results in the following additional constraint on $\tilde{u}$
\begin{align}
\tilde{u}>\frac{1}{\psi\left(1+\frac{\epsilon_{M}}{\rho\psi}\right)}.
\end{align}
Therefore, $Q_3$ can be calculated as follows:
\begin{align}
Q_3 =& \int_{\frac{1}{\psi\left(1+\frac{\epsilon_{M}}{\rho\psi}\right)}}^{\frac{\rho}{\epsilon_M}}\left(
 F_{\tilde{z}_1}\left( x \right) - F_{\tilde{z}_1}\left(\frac{1}{\psi}-\frac{\epsilon_{M}}{\rho\psi}x \right) \right) f_{\tilde{u}}(x)dx.
\end{align}
Finding an exact expression for the above integral is difficult. In order to apply  Chebyshev-Gauss quadrature, the above integral can be first rewritten as follows:
\begin{align}
Q_3 =& \frac{b-a}{2}\int^{1}_{-1}\left[
 F_{\tilde{z}_1}\left( \frac{b-a}{2}x+\frac{a+b}{2} \right) - F_{\tilde{z}_1}\left(\frac{1}{\psi}\right.\right.\\ \nonumber &\left.\left.-\frac{\epsilon_{M}}{\rho\psi}\left(\frac{b-a}{2}x+\frac{a+b}{2}\right) \right) \right] f_{\tilde{u}}\left(\frac{b-a}{2}x+\frac{a+b}{2}\right)dx.
\end{align}

After  applying  Chebyshev-Gauss quadrature,  $Q_3$ can be approximated as follows:
\begin{align}\label{equ3}
Q_3 \approx & \sum^{N_a}_{i=1}w_i\frac{b-a}{2}\left[
 F_{\tilde{z}_1}\left( \frac{b-a}{2}x_i+\frac{a+b}{2} \right)  \right.\\ \nonumber &\left.- F_{\tilde{z}_1}\left(\frac{1}{\psi}-\frac{\epsilon_{M}}{\rho\psi}\left(\frac{b-a}{2}x_i+\frac{a+b}{2}\right) \right) \right]\\ \nonumber &\times  f_{\tilde{u}}\left(\frac{b-a}{2}x_i+\frac{a+b}{2}\right) \sqrt{1-x_i^2}.
\end{align}
Substituting \eqref{equ1}, \eqref{equ2} and \eqref{equ3} into \eqref{threem terms}, a closed-form expression for the outage probability can be obtained and  the theorem is proved.
\end{proof}
The steps used to obtain  the closed form expression provided in the above theorem can also be used to calculate the achievable diversity gain, as shown in the following lemma.
\begin{lemma}\label{lemma1}
The diversity gain  for unicasting transmission achieved by the proposed NOMA scheme is $1$.
\end{lemma}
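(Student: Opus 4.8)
The plan is to work directly from the decomposition $\mathrm{P}_N = Q_1+Q_2+Q_3$ established in the proof of the theorem, and to determine the exponent with which each of the three nonnegative terms vanishes as $\rho\to\infty$. Writing the diversity gain as $d=-\lim_{\rho\to\infty}\frac{\log \mathrm{P}_N}{\log\rho}$, the slowest-decaying of the three terms dictates $d$, so it suffices to identify the leading power of $\rho^{-1}$ in each. My expectation is that the bottleneck is $Q_1$, namely the event that one of the $K-1$ weak users falls into deep fading and forces $\alpha_U^2=0$; I anticipate this event has probability $\Theta(\rho^{-1})$, which (assuming $K\ge 2$, so that $u=\min\{z_2,\ldots,z_K\}$ is well defined) pins the diversity order at $1$.

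First I would treat the two terms that are already in closed form. For $Q_1$ in \eqref{equ1}, I would expand the incomplete gamma factor and the exponential for small arguments, using $\frac{\gamma(M,x)}{(M-1)!}\approx \frac{x^M}{M!}$ and $e^{-(K-1)x}\approx 1-(K-1)x$ as $x\to 0$ with $x=\frac{\epsilon_M}{\rho}$. This gives $Q_1\approx \frac{(K-1)\epsilon_M}{\rho}+\frac{1}{M!}\left(\frac{\epsilon_M}{\rho}\right)^M$, whose leading behaviour is $\Theta(\rho^{-1})$ whenever $K\ge 2$ and $M\ge 1$; intuitively the first summand is $\mathrm{P}\!\left(u<\frac{\epsilon_M}{\rho}\right)$ and the second is $\mathrm{P}\!\left(z_1<\frac{\epsilon_M}{\rho}\right)$. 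For $Q_2$ in \eqref{equ2}, both arguments $K\phi$ and $\frac{K\epsilon_M}{\rho}$ tend to $0$, and the same small-argument expansion yields $Q_2\approx \frac{\phi^M-(\epsilon_M/\rho)^M}{M!}$; since $\phi=\Theta(\rho^{-1})$ this is $\Theta(\rho^{-M})$, hence $O(\rho^{-1})$.

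The hardest term is $Q_3$, because its Chebyshev-Gauss form is an approximation rather than an exact closed form, so I would bound it directly rather than expand it. The key observation is that on the event defining $Q_3$ one has $z_1>u$ and $\frac{u-\epsilon_M/\rho}{u(1+\epsilon_M)}>0$, so the constraint $z_1\frac{u-\epsilon_M/\rho}{u(1+\epsilon_M)}<\frac{\epsilon_U}{\rho}$ forces $\frac{u-\epsilon_M/\rho}{1+\epsilon_M}<\frac{\epsilon_U}{\rho}$, i.e. $u<\phi$. Therefore $Q_3\le \mathrm{P}\!\left(\frac{\epsilon_M}{\rho}<u<\phi\right)=e^{-(K-1)\epsilon_M/\rho}-e^{-(K-1)\phi}$, and expanding the two exponentials gives $\Theta\!\left((K-1)\left(\phi-\frac{\epsilon_M}{\rho}\right)\right)=\Theta(\rho^{-1})$, where I used $\phi-\frac{\epsilon_M}{\rho}=\frac{\epsilon_U(1+\epsilon_M)}{\rho}$. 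Hence $Q_3=O(\rho^{-1})$ as well.

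Finally I would combine the three estimates. Summing, $\mathrm{P}_N=Q_1+Q_2+Q_3=O(\rho^{-1})$, which yields $d\ge 1$; conversely, since all terms are nonnegative, $\mathrm{P}_N\ge Q_1=\Omega(\rho^{-1})$, which yields $d\le 1$. Together these force $d=1$. The only genuine obstacle is the control of $Q_3$, and the plan circumvents computing it exactly by noting that its defining event is contained in the thin shell $\left\{\frac{\epsilon_M}{\rho}<u<\phi\right\}$, whose probability is $\Theta(\rho^{-1})$.
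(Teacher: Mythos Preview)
Your proposal is correct and follows essentially the same route as the paper: both use $\mathrm{P}_N\ge Q_1=\Theta(\rho^{-1})$ for the upper bound on the diversity order, and both upper-bound $Q_3$ by observing that on the $Q_3$-event $z_1>u$ forces $\frac{u-\epsilon_M/\rho}{1+\epsilon_M}<\frac{\epsilon_U}{\rho}$, i.e. $u<\phi$, which yields $Q_3\le \mathrm{P}\!\left(\frac{\epsilon_M}{\rho}<u<\phi\right)\approx (K-1)\psi=\Theta(\rho^{-1})$; the paper frames this step as replacing $Q_3$ by a dominating probability $Q_{31}$, but the containment argument is identical. Your treatment of $Q_2$ as $\Theta(\rho^{-M})$ is in fact sharper than the paper's looser $O(\rho^{-1})$ estimate, though either suffices here.
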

\begin{proof}
The diversity gain achieved by the proposed NOMA scheme can be obtained by studying the upper and lower bounds on the outage probability. Based on the expression for the outage probability shown in \eqref{threem terms}, we can obtain the  following  lower bound on the outage probability:
\begin{align}\nonumber
\mathrm{P}_N &\geq Q_1\underset{(a)}{=} 1 -   \left(e^{-\frac{\epsilon_M}{\rho}} \sum^{M-1}_{m=0}\frac{\epsilon_M^m}{\rho^mm!}\right)  e^{-\frac{(K-1)\epsilon_M}{\rho}}  \\   &\underset{(b)}{\approx}  1 -   \left(1-\frac{K\epsilon_M}{\rho}\right) \label{Q1} = \frac{K\epsilon_M}{\rho},
\end{align}
where   step $(a)$ follows from Eq. (8.352.2) in \cite{GRADSHTEYN} and step $(b)$ follows from the high SNR approximation. The approximation in \eqref{Q1} implies  that the diversity gain achieved by the NOMA scheme is upper bounded by $1$.

On the other hand, we can construct the following upper bound on the outage probability:
\begin{align}\label{q124}
\mathrm{P}_N &\leq Q_1+Q_2+Q_4,
\end{align}
where
\begin{align}\nonumber
Q_{31}=&\mathrm{P}\left( z_1>u,   \frac{u-\frac{\epsilon_{M}}{\rho}}{(1+\epsilon_{M})}      < \frac{\epsilon_{U}}{\rho}, u>\frac{\epsilon_M}{\rho}\right)  .
\end{align}
The probability in \eqref{q124} is an upper bound on $\mathrm{P}_N$ since $Q_{31}>Q_3$. Note that $Q_{31}$ can be calculated as follows:
\begin{align} \label{Q4}
Q_{31}=&\mathrm{P}\left( z_1>u,    u      < \frac{\epsilon_{M}}{\rho}+\psi, u>\frac{\epsilon_M}{\rho}\right)\\ \nonumber
\leq&\mathrm{P}\left(     u      < \frac{\epsilon_{M}}{\rho}+\psi, u>\frac{\epsilon_M}{\rho}\right)
\\\nonumber =& e^{-(K-1)\frac{\epsilon_M}{\rho}} - e^{-(K-1)\left(\frac{\epsilon_{M}}{\rho}+\psi\right)} \approx
 (K-1)\psi,
\end{align}
where the approximation is obtained in the high SNR regime. Furthermore,  $Q_2$ can be upper bounded as follows:
 \begin{align} \label{Q2x}
Q_2
\leq&\frac{\gamma(M, K\phi)}{(M-1)!K^M}\\ \nonumber =& \frac{1-e^{-K\phi}\sum^{M-1}_{m=0}\frac{K^m\phi^M}{m!} }{K^M}\approx  \frac{K\phi }{K^M},
\end{align}
where the series representation of gamma functions based on Eq. (8.352.2) in \cite{GRADSHTEYN} has been used.
Combining \eqref{q124}, \eqref{Q1}, \eqref{Q2x} and \eqref{Q4}, one can find that the diversity order achieved by the NOMA scheme is lower bounded by one. Since both the upper and lower bounds on the diversity gain are one, the proof is complete.
\end{proof}
{\it Remark 1:} The reason to have a diversity gain of $1$ can be explained in the following. Because of the used cognitive radio power allocation policy, the bottleneck of the system is the quality of the weakest channel gain, $u$. If $u$ is smaller than $\frac{\epsilon_M}{\rho}$, i.e., the user with the weakest channel condition cannot detect the multicasting message correctly, all the power will be spent for multicasting, which is the dominant event among all the possible outage events for unicasting. Following steps similar to those in the proof for Lemma \ref{lemma1}, it is straightforward to show that the probability for the event of $u<\frac{\epsilon_M}{\rho}$ is inversely proportional to the SNR, i.e., a diversity gain of $1$.  It is worth noting that the result shown in Lemma \ref{lemma1} is consistent to the one  previously reported in \cite{Zhiguo_CRconoma}.

\subsection{Performance Gain of NOMA over OMA}\label{subsection spec 2}
In this subsection, the likelihood that NOMA will outperform OMA is studied first, which offers some insights about how to design  user scheduling in order to further enlarge the performance gap between NOMA and OMA.
The probability for  NOMA to outperform  OMA  can be characterized  as follows:
\begin{align}
\mathrm{P}_D \triangleq &\mathrm{P}\left(R_{U,1}-\bar{R}_{U,1}\leq0\right)
 \\ \nonumber
 =&\mathrm{P}\left(\log\left(1+\rho z_1 \max \left\{0, \right.\right.\right. \\ \nonumber &\left.\left.\min  \left\{\frac{z_{1}-\frac{\epsilon_{M}}{\rho}}{z_{1}(1+\epsilon_{M})},\frac{u-\frac{\epsilon_{M}}{\rho}}{u(1+\epsilon_{M})} \right\} \right\}\right)
\\\nonumber
 &\leq \log\left(1+\rho z_1 \right) \left(1-\min\left\{1,\right.\right.\\ \nonumber &  \left.\left.\left. \frac{R_{M}}{\log(1+\rho \min\{|\mathbf{h}_k\mathbf{p}|^2, 1\leq k\leq K\})}\right\}\right)\right).
\end{align}

  Note that in the case of all the power (time) is allocated to multicasting, the two schemes realize the same performance, which means that the addressed probability  can be rewritten as follows:
\begin{align}\nonumber
\mathrm{P}_D
 &=\mathrm{P}\left(\min\{z_1,u\}<\frac{\epsilon_M}{\rho}\right)+\mathrm{P}\left(\min\{z_1,u\}>\frac{\epsilon_M}{\rho},    \right. \\ \nonumber &\log\left(1+\rho z_1 \min  \left\{\frac{z_{1}-\frac{\epsilon_{M}}{\rho}}{z_{1}(1+\epsilon_{M})},\frac{u-\frac{\epsilon_{M}}{\rho}}{u(1+\epsilon_{M})}   \right\}\right)
\\
 &\leq \log\left(1+\rho z_1 \right)   \left.\left(1-  \frac{R_{M}}{\log(1+\rho \min\{z_1, u\})} \right)\right),
\end{align}
where $|\mathbf{h}_k\mathbf{p}|^2$ is equal to $z_k$ since $\mathbf{p}=\mathbf{w}$.
Depending on the relationship between $u$ and $z_1$, we can further separate the probability into the following terms:
\begin{align}\label{y1}
\mathrm{P}_D
 =&\mathrm{P}\left(\min\{z_1,u\}<\frac{\epsilon_M}{\rho}\right)\\ \nonumber &+\mathrm{P}\left(u>\frac{\epsilon_M}{\rho}, z_1>u,    \right.  \log\left(1+\rho z_1 \frac{u-\frac{\epsilon_{M}}{\rho}}{u(1+\epsilon_{M})}    \right)
\\\nonumber
 &\leq \log\left(1+\rho z_1 \right)   \left.\left(1-  \frac{R_{M}}{\log(1+\rho u)} \right)\right)\\ \nonumber
 &+\mathrm{P}\left(z_1>\frac{\epsilon_M}{\rho},  z_1<u,  \right.  \log\left(1+\rho  \frac{z_{1}-\frac{\epsilon_{M}}{\rho}}{ (1+\epsilon_{M})} \right)
\\\nonumber
 &\leq \log\left(1+\rho z_1 \right)   \left.\left(1-  \frac{R_{M}}{\log(1+\rho z_1)} \right)\right).
\end{align}
One  can evaluate that the following equality always holds:
\begin{align}
\label{y2}
  \log\left(1+\rho  \frac{z_{1}-\frac{\epsilon_{M}}{\rho}}{ (1+\epsilon_{M})} \right)
 =\log\left(1+\rho z_1 \right)    \left(1-  \frac{R_{M}}{\log(1+\rho z_1)} \right),
\end{align}
which means the probability, $\mathrm{P}_D$, is lower bounded by the following:
\begin{align}
\mathrm{P}_D
 \geq&\mathrm{P}\left(z_1>\frac{\epsilon_M}{\rho},  z_1<u \right)\\\nonumber =& \int^{\infty}_{\frac{\epsilon_M}{\rho}} e^{-Kz}\frac{z^{M-1}}{(M-1)!} dz = \frac{\Gamma\left(M, \frac{K\epsilon_M}{\rho}\right)}{(M-1)!K^M} .
\end{align}
This lower bound  can be approximated at high SNR as follows:
\begin{align}
\mathrm{P}_D
 \geq&e^{-K\frac{\epsilon_M}{\rho}}\sum^{M-1}_{m=0}\frac{\left(\frac{\epsilon_M}{\rho}\right)^{m}}{K^{M-m}m!}\approx
   \frac{1}{K^{M}},
\end{align}
which means that it is always possible that the unicasting rate of NOMA is smaller than that of OMA, even at high SNR.

{\it Remark 2:} An important conclusion  from the above analysis is that  the event of $z_1<u$ is very damaging to the performance of NOMA. Particularly,  \eqref{y1} and \eqref{y2} show that the event of $z_1<u$ leads to the situation that NOMA offers no performance gain over OMA. This observation motivates the following user scheduling scheme.

\underline{User Scheduling:} Prior to the NOMA transmission, the base station selects a user whose channel norm is the largest for unicasting, i.e., user $i^*$ is scheduled for unicasting if $i^*=\arg \max \{|\mathbf{h}_k|^2, 1\leq k \leq K\}$.

With such a choice of $i^*$, the case of $z_1<u$ can be avoided since
\begin{align}
u&\triangleq  |\mathbf{h}_u\mathbf{w}|^2 \underset{(a)}{\leq}   |\mathbf{w}|^2 |\mathbf{h}_u|^2 \underset{(b)}{=} |\mathbf{h}_u|^2 \underset{(c)}{\leq} |\mathbf{h}_{i^*}|^2\triangleq   {z_1},
\end{align}
where $\mathbf{h}_u$ denotes the channel vector for the user with the smallest channel norm, step (a) follows from the Cauchy-Schwarz inequality, step (b) follows from the fact that $\mathbf{w}=\frac{\mathbf{h}_{i^*}^H}{|\mathbf{h}_{i^*}|}$ and step (c) is due to the used scheduling scheme. The simulation results provided in Section \ref{section simulation} demonstrate that the use of this user scheduling scheme effectively increases the performance gap between NOMA and OMA.

 \section{Security  Enhancements Achieved by NOMA}\label{section security}
 In this section, we first show that the use of NOMA unicasting can always improve the unicasting security, compared to OMA, and then the unicasting secrecy outage probability is studied, from which insights about how to further improve the security enhancements of NOMA can be obtained.
 \subsection{The reduction of the eavesdropping capability by using NOMA}
 First define the secrecy rates achieved by NOMA and OMA as follows:
   \begin{align}
  R_S\triangleq &\left(R_{U,1} - \max\{R_{U,k}, 2\leq k\leq K\}\right)^+
   \end{align}
   and
   \begin{align} \bar{R}_S\triangleq&\left(\bar{R}_{U,1} - \max\{\bar{R}_{U,k}, 2\leq k\leq K\}\right)^+,
  \end{align}
 respectively,  where $(x)^+\triangleq \max\{0, x\}$.

In order to show $R_S$ is always larger than or equal to $\bar{R}_S$, i.e., $R_S\geq \bar{R}_S$,  the following lemma is presented first.
  \begin{lemma}\label{lemma2}
  Define the function $F(x)$, for $x\geq u$, as follows:
  \begin{align}\label{FU}
  F_u(x)=&\log\left(1+\rho x    \frac{u-\frac{\epsilon_{M}}{\rho}}{u(1+\epsilon_{M})}\right) \\\nonumber
  &-\left(1-   \frac{R_{M}}{\log(1+\rho u)} \right) \log\left(1+\rho x \right)
  \end{align}
  where $u>\frac{\epsilon_M}{\rho}$.
  This is  a monotonically decreasing function with respect to  $x$,   in the high SNR regime.
  \end{lemma}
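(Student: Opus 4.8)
The plan is to prove monotonicity the direct way, by showing $F_u'(x)\le 0$ for all $x\ge u$. To keep the algebra transparent I would introduce the shorthand $c\triangleq\frac{u-\frac{\epsilon_M}{\rho}}{u(1+\epsilon_M)}$ for the NOMA power fraction assigned to $s_U$ when the weakest effective channel gain equals $u$, and $\eta\triangleq 1-\frac{R_M}{\log(1+\rho u)}$ for the corresponding OMA unicasting time fraction, so that $F_u(x)=\log(1+\rho c x)-\eta\log(1+\rho x)$. Differentiating and pulling out the positive constant $\frac{\rho}{\ln 2}$ produced by the base-$2$ logarithm gives
\begin{align}
F_u'(x)=\frac{\rho}{\ln 2}\left(\frac{c}{1+\rho c x}-\frac{\eta}{1+\rho x}\right).
\end{align}
Before anything else I would record the structural fact that drives the whole result: since $u>\frac{\epsilon_M}{\rho}$ we have $0<c<\frac{1}{1+\epsilon_M}<1$, i.e. the NOMA unicasting coefficient is strictly bounded away from one.

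Next I would reduce the sign of $F_u'(x)$ to an algebraic inequality. Because both denominators are positive, clearing them shows that $F_u'(x)\le 0$ is equivalent to $c(1+\rho x)\le \eta(1+\rho c x)$, i.e. to
\begin{align}
(c-\eta)+\rho c\,x\,(1-\eta)\le 0 .
\end{align}
This is the inequality I must certify for every $x\ge u$.

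The high SNR regime is what closes the argument. As $\rho\to\infty$ the OMA time fraction $\frac{R_M}{\log(1+\rho u)}\to 0$, hence $\eta\to 1$; substituting $\eta=1$ collapses the inequality to $c-1\le 0$, which holds strictly by the bound above. Equivalently, in this limit $F_u(x)\to\log\frac{1+\rho c x}{1+\rho x}$, and the bracketed ratio is monotonically decreasing since the numerator of $\frac{d}{dx}\frac{1+\rho c x}{1+\rho x}$ is proportional to $c-1<0$. Thus $F_u$ inherits strict monotone decrease directly from $c<1$. As a by-product, evaluating at the left endpoint and using the identity \eqref{y2} gives $F_u(u)=0$, so that monotonicity upgrades to $F_u(x)\le 0$ on $x\ge u$, which is the form in which the lemma will be used to establish $R_S\ge \bar R_S$.

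The main obstacle is the honest treatment of the correction $1-\eta=\frac{R_M}{\log(1+\rho u)}$, which is strictly positive and enters the reduced inequality with a positive sign through the term $\rho c\,x\,(1-\eta)$. This correction vanishes only logarithmically in $\rho$, whereas the favorable gap $1-c$ is of order one, so the sign of $F_u'(x)$ is genuinely decided by a competition between the two and a naive term-by-term bound at fixed SNR is inconclusive. Resolving this competition cleanly is the crux: the route I would follow is to pass to the high SNR limit $\eta\to 1$ first, reducing monotonicity to the strict inequality $c<1$, rather than attempting to control both competing terms simultaneously at finite $\rho$ — which is precisely the sense in which the statement is asserted ``in the high SNR regime.''
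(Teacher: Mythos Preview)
Your reduction is correct up to and including the equivalence
\[
F_u'(x)\le 0 \quad\Longleftrightarrow\quad (c-\eta)+\rho c\,x\,(1-\eta)\le 0,
\]
but your asymptotic step is wrong in exactly the place you flagged as the ``main obstacle.'' You argue that since $\eta\to 1$ one may collapse the inequality to $c-1\le 0$. However the troublesome term is not $1-\eta$ itself but
\[
\rho c\,x\,(1-\eta)=\frac{\rho c\,x\,R_M}{\log(1+\rho u)}\sim \frac{R_M}{1+\epsilon_M}\cdot\frac{\rho x}{\log(\rho u)}\xrightarrow[\rho\to\infty]{}+\infty,
\]
because $\rho/\log\rho\to\infty$. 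Hence for fixed $x\ge u$ and large $\rho$ the left-hand side tends to $+\infty$, the displayed inequality \emph{fails}, and your own computation yields $F_u'(x)>0$. The alternative route you offer, replacing $F_u(x)$ by $\log\frac{1+\rho c x}{1+\rho x}$, commits the same error: the discarded piece $(1-\eta)\log(1+\rho x)=R_M\,\frac{\log(1+\rho x)}{\log(1+\rho u)}$ does not vanish but tends to $R_M$, so the limiting function is not the one you wrote.

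What is actually going on is that the word ``decreasing'' in the lemma statement is a typo. The paper's own proof carries out precisely your reduction, obtains
\[
\frac{(1+\rho x)(1+\rho c x)}{\rho\log e}\,F_u'(x)\approx \frac{1}{1+\epsilon_M}\left(1+\frac{\rho x R_M}{\log(\rho u)}\right)-1,
\]
observes that $\rho/\log\rho\to\infty$, and concludes that $F_u'(x)$ is \emph{positive} at high SNR. Consistently, the proof of Theorem~\ref{theorem2} invokes the lemma by saying ``$F_{z_K}(x)$ is a monotonically increasing function'' and uses this together with $z_1\ge z_2$ to get $\Delta_S=F_{z_K}(z_1)-F_{z_K}(z_2)\ge 0$. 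So the intended (and proved) statement is monotone \emph{increase}; your derivative computation, read correctly, establishes exactly that.
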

  \begin{proof}
  To simplify the proof, we rewrite the function as follows:
    \begin{align}
  F_u(x)=&\log e \cdot \ln\left(1+\rho x    \frac{u-\frac{\epsilon_{M}}{\rho}}{u(1+\epsilon_{M})}\right) \\\nonumber
  &- \log e \cdot \ln\left(1+\rho x \right)^{\left(1-   \frac{R_{M}}{\log e \cdot \ln(1+\rho u)} \right)}.
  \end{align}
The lemma can be proved by showing that the first order derivative of the function  is negative. In particular, the first order derivative of $F_u(x)$ is given by
  \begin{align}
  \frac{d F_u(x)}{dx}=&\log e \cdot  \frac{\rho     \frac{u-\frac{\epsilon_{M}}{\rho}}{u(1+\epsilon_{M})}}{1+\rho x    \frac{u-\frac{\epsilon_{M}}{\rho}}{u(1+\epsilon_{M})}} - \rho \log e  \\\nonumber
  &\times \frac{\left(1-   \frac{R_{M}}{\log e \cdot \ln(1+\rho u)} \right)\left(1+\rho x \right)^{\left(-   \frac{R_{M}}{\log e \cdot \ln(1+\rho u)} \right)}}{\left(1+\rho x \right)^{\left(1-   \frac{R_{M}}{\log e \cdot \ln(1+\rho u)} \right)}}.
  \end{align}
In order to show $\frac{d F_u(x)}{dx}\leq 0$,  we first have the following:
  \begin{align}
  &\frac{\left(1+\rho x \right)\left(1+\rho x    \frac{u-\frac{\epsilon_{M}}{\rho}}{u(1+\epsilon_{M})}\right)}{\rho\log e}\frac{d F_u(x)}{dx}\\\nonumber =&       \left(1+\rho x \right)   \frac{u-\frac{\epsilon_{M}}{\rho}}{u(1+\epsilon_{M})} -       \left(1+\rho x    \frac{u-\frac{\epsilon_{M}}{\rho}}{u(1+\epsilon_{M})}\right)\\ \nonumber &\times \left(1-   \frac{R_{M}}{\log e \cdot \ln(1+\rho u)} \right)\\\nonumber =&  \frac{u-\frac{\epsilon_{M}}{\rho}}{u(1+\epsilon_{M})}+ \frac{R_{M}}{\log(1+\rho u)} \\ \nonumber &+\rho x \frac{u-\frac{\epsilon_{M}}{\rho}}{u(1+\epsilon_{M})}  \frac{R_{M}}{\log(1+\rho u)}  -1 .
  \end{align}
  With fixed $u$ and $x$, by increasing $\rho$, we can have the following approximation:
    \begin{align}
  &\frac{\left(1+\rho x \right)\left(1+\rho x    \frac{u-\frac{\epsilon_{M}}{\rho}}{u(1+\epsilon_{M})}\right)}{\rho\log e}\frac{d F_u(x)}{dx} \\\nonumber \rightarrow &  \frac{1}{(1+\epsilon_{M})} \left(1+   \frac{\rho x R_{M}}{\log(\rho u)}\right)  -1 .
  \end{align}
  Note that when $\rho\rightarrow \infty $, we can have $ \frac{\rho }{\log\rho }\rightarrow \infty$, which means that the first order derivative of the function will be positive at high SNR, and the proof is complete.
  \end{proof}

  By using the above lemma, we can     prove  that the use of NOMA improves the secrecy performance compared to OMA, as shown in the following theorem.
  \begin{theorem}\label{theorem2}
  The secrecy unicasting rate achieved by the NOMA scheme is always larger than or equal to  that of OMA, i.e., the following inequality always holds
  \begin{align}
R_S\geq \bar{R}_S,
  \end{align}
  in the high SNR regime.
  \end{theorem}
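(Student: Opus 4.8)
The plan is to collapse the comparison of the two secrecy rates into a single monotonicity statement, namely Lemma \ref{lemma2}, by a case analysis on whether the legitimate user's effective gain dominates that of the strongest eavesdropper. First I would expose the common structure hidden in the definitions. Because $f(y)=\frac{y-\epsilon_{M}/\rho}{y(1+\epsilon_{M})}$ is increasing (the monotonicity used in the outage analysis), the NOMA power coefficient is $\alpha_U^2=\max\{0,f(w)\}$ with $w\triangleq\min\{z_1,\ldots,z_K\}$, and the OMA time coefficient is $\gamma=\min\{1,R_{M}/\log(1+\rho w)\}$; both are governed by the single bottleneck $w$. If $w<\epsilon_{M}/\rho$, then $\alpha_U^2=0$ and $\gamma=1$, so all four rates vanish and $R_S=\bar{R}_S=0$; this disposes of the multicasting-outage case, and henceforth I assume $w\geq\epsilon_{M}/\rho$, so that $\alpha_U^2=f(w)>0$ and $\gamma<1$.

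Next I would make the eavesdropper maxima explicit. Since $R_{U,k}=\log(1+\rho z_k\alpha_U^2)$ and $\bar{R}_{U,k}=(1-\gamma)\log(1+\rho z_k)$ are both increasing in $z_k$ with common coefficients, the two maxima over $2\leq k\leq K$ are attained at the same index, the strongest eavesdropper $z^{*}\triangleq\max\{z_2,\ldots,z_K\}$. The crucial identification is then that the function in Lemma \ref{lemma2}, taken with $u=w$, computes exactly the per-user NOMA-minus-OMA unicasting rate: $F_w(z_1)=R_{U,1}-\bar{R}_{U,1}$ and $F_w(z^{*})=\max_k R_{U,k}-\max_k\bar{R}_{U,k}$. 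Both arguments lie in the admissible range $x\geq w$, since $z_1\geq w$ and $z^{*}\geq\min\{z_2,\ldots,z_K\}\geq w$.

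I would then split on the sign of $z_1-z^{*}$. If $z_1<z^{*}$, monotonicity of $\log$ gives $R_{U,1}<\max_k R_{U,k}$ and $\bar{R}_{U,1}<\max_k\bar{R}_{U,k}$, so both positive parts clip to zero and $R_S=\bar{R}_S=0$. If instead $z_1\geq z^{*}$, both positive parts are inactive, and subtracting the two identities above gives $R_S-\bar{R}_S=F_w(z_1)-F_w(z^{*})$. Here Lemma \ref{lemma2} finishes the argument: in the high-SNR regime $F_w$ has positive derivative on $[w,\infty)$ (precisely the sign of $F_w'$ established in the proof of that lemma), so $z_1\geq z^{*}$ forces $F_w(z_1)\geq F_w(z^{*})$, i.e.\ $R_S\geq\bar{R}_S$.

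The hard part will be the careful bookkeeping of the two $(\cdot)^+$ operators: the clean reduction $R_S-\bar{R}_S=F_w(z_1)-F_w(z^{*})$ is legitimate only once both positive parts are known to be active, which is exactly the regime $z_1\geq z^{*}$, and one must verify separately that the complementary regime forces both secrecy rates to vanish simultaneously so that the inequality still holds. A secondary subtlety is the direction and asymptotic nature of the monotonicity supplied by Lemma \ref{lemma2}: the required ordering $F_w(z_1)\geq F_w(z^{*})$ needs $F_w$ to be \emph{increasing}, which is the sign of the derivative obtained only as $\rho\to\infty$, and this is precisely why the theorem is confined to the high-SNR regime.
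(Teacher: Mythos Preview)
Your proposal is correct and follows essentially the same route as the paper: the same three-case split (multicasting outage forces both secrecy rates to zero; $z_1$ weaker than the best eavesdropper forces both to zero; otherwise reduce $R_S-\bar R_S$ to $F_w(z_1)-F_w(z^{*})$ and invoke Lemma~\ref{lemma2}). Your presentation is in fact tidier, since you make explicit that both schemes are governed by the single bottleneck $w=\min_k z_k$ and that $F_w$ encodes the per-user NOMA-minus-OMA rate difference, and you correctly flag that it is the \emph{increasing} sign of $F_w'$ (established in the proof of Lemma~\ref{lemma2}, despite the ``decreasing'' in its statement) that is actually needed.
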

  \begin{proof}
 To prove the theorem, it is sufficient to prove the following inequality,   $\Delta_S\triangleq \left(R_{U,1} - \max\{R_{U,k}, 2\leq k\leq K\}\right)- \left(\bar{R}_{U,1} - \max\{\bar{R}_{U,k}, 2\leq k\leq K\}\right)\geq 0$. Without loss of generality, assume that the channels of the $(K-1)$ users (eavesdroppers for unicasting) are ordered as follows:
 \begin{align}
 z_2\geq \cdots \geq z_K.
 \end{align}
 Note that this assumption is used only to simplify the description of the proof. With this ordering, $u=z_K$.

 \subsubsection{When $\min\{z_1, z_K\}\leq \frac{\epsilon_M}{\rho}$}
 This case corresponds to the situation that all the power (time) will be allocated to multicasting, and no power (time) is available to unicasting, which means the unicasting rates are zero, $ R_{U,k}=  \bar{R}_{U,k}  =0$, for $1\leq k\leq K$, and the difference between the two secrecy rates is   zero.

 \subsubsection{When $\min\{z_1, z_K\}> \frac{\epsilon_M}{\rho}$ and $z_1<z_2$} In this case, at least one of the eavesdroppers has a better channel condition than  user $1$. Both   $\left(R_{U,1} - \max\{R_{U,k}, 2\leq k\leq K\}\right)$ and $\left(\bar{R}_{U,1} - \max\{\bar{R}_{U,k}, 2\leq k\leq K\}\right)$, are negative and therefore both the secrecy rates are zero, which means that the difference between two secrecy rates is still zero.

 \subsubsection{When $\min\{z_1, z_K\}> \frac{\epsilon_M}{\rho}$ and $z_1\geq z_2$} In this case, the difference between the two secrecy rates can be expressed as follows:
    \begin{align}\label{delta}
  \Delta_S = & \log\left(1+\rho z_1    \frac{z_{K}-\frac{\epsilon_{M}}{\rho}}{z_{K}(1+\epsilon_{M})}\right)  \\\nonumber & - \log\left(1+\rho z_2    \frac{z_{K}-\frac{\epsilon_{M}}{\rho}}{z_{K}(1+\epsilon_{M})}\right)-\left(1-   \frac{R_{M}}{\log(1+\rho z_K)} \right)
  \\\nonumber &\times \left(\log\left(1+\rho z_1 \right) -\log\left(1+\rho z_2 \right)\right),
  \end{align}
  where we use the assumption that the users have been ordered, i.e., $z_2$ is the largest channel gain and $z_K$ is the smallest among the $(K-1)$ users (eavesdroppers).  By using the function defined in \eqref{FU}, the secrecy rate difference can be expressed as follows:
      \begin{align}
  \Delta_S = & F_{z_K}(z_1) -  F_{z_K}(z_2).
  \end{align}
By applying Lemma \ref{lemma2}, we learn that $F_{z_K}(x)$ is a monotonically increasing function, which means $\Delta_S \geq 0$, since $z_1\geq z_2$.

In summary, the secrecy rate of $R_S$ is always larger than or equal to $\bar{R}_S$, and the proof is complete.
\end{proof}
{\it Remark 3:} The proof of Theorem \ref{theorem2} indicates that the event that  user $1$ has a weak channel gain results in the situation that OMA and NOMA have the same secrecy rates.  Following the rationales discussed in Section \ref{subsection spec 2}, we can again apply the proposed   user scheduling scheme to avoid this undesirable situation and improve the secrecy performance gain of NOMA over OMA.

{\it Remark 4:} Consider that the eavesdroppers' channels are ordered as $z_2\geq \cdots \geq z_K$, and assume that $\min\{z_1, z_K\}> \frac{\epsilon_M}{\rho}$ and $z_1\geq z_2$, i.e., the third case in the proof of Theorem \ref{theorem2}.   According to \eqref{delta} in the proof for Theorem \ref{theorem2}, $F_{z_K}(z_k)$ denotes the difference of user $k$'s capabilities to decode the unicasting message in the NOMA and OMA modes.
  It is worth pointing out that $F_{z_K}(z_K)=0$ since
\begin{align}
\log\left(1+\rho        \frac{z_K-\frac{\epsilon_{M}}{\rho}}{ (1+\epsilon_{M})}    \right) &= \log\left(        \frac{1+\rho z_K}{ (1+\epsilon_{M})}    \right) \\\nonumber
&= \log\left(1+\rho z_K \right)-  R_M.
\end{align}
Therefore  $F_{z_K}(z_k)\geq 0$, for $1\leq k \leq (K-1)$, since  $z_k\geq z_K$ and $F_{z_K}(z_k)\geq F_{z_K}(z_K)$. The fact that $F_{z_K}(z_k)\geq 0$ means that the use of NOMA can increase all the users' capabilities to detect the unicasting message. However,  the use of NOMA brings more improvements to user $1$ than other users, as pointed out  in Theorem \ref{theorem2}.

 \subsection{Characterizing the Secrecy Outage Probability}
Recall that the secrecy rate achieved by the NOMA scheme is given by
\begin{align}
R_S \triangleq &\left(\log\left(1+\rho z_1 \alpha_{U}^2\right)-\log\left(1+\rho v \alpha_{U}^2\right)\right)^+,
\end{align}
where $v=\max\{z_2, \cdots, z_K\}$.

Therefore the secrecy outage probability can be expressed as follows:
\begin{align}
\mathrm{P}_S \triangleq &\mathrm{P}\left(\log \left(1+\rho z_1 \alpha_{U}^2\right)-\log\left(1+\rho v \alpha_{U}^2\right)<\tilde{R}_S\right) ,
\end{align}
where $\tilde{R}_S$ is the targeted secrecy rate. This secrecy outage probability can be rewritten as follows:
\begin{align}
\mathrm{P}_S = &\mathrm{P}\left( (z_1-2^{\tilde{R}_S}v) \alpha_{U}^2 <\frac{\epsilon_S}{\rho} \right) ,
\end{align}
where $\epsilon_S=2^{\tilde{R}_S}-1$. By studying the relationship between $z_1$ and $z_k$, the outage probability can be further expressed as follows:
\begin{align}\label{qx3}
\mathrm{P}_S = &\underset{Q_{5}}{\underbrace{\mathrm{P}(z_1<u)+\mathrm{P}\left(z_1>u, u<\frac{\epsilon_M}{\rho}   \right)}}\\\nonumber &+\underset{Q_{6}}{\underbrace{\mathrm{P}\left(z_1>u, u>\frac{\epsilon_M}{\rho}  , z_1<2^{\tilde{R}_S}v \right)}}+Q_4,
\end{align}
where $u=\min\{z_2, \cdots, z_K\}$ and  the factor, $Q_4$, is expressed as follows:
\begin{align}\nonumber
Q_4 =&  \mathrm{P}\left(z_1>u>\frac{\epsilon_M}{\rho} , z_1>2^{\tilde{R}_S}v,
\right.\\\nonumber &\left.(z_1-2^{\tilde{R}_S}v)
   \frac{u-\frac{\epsilon_{M}}{\rho}}{u(1+\epsilon_{M})}    <\frac{\epsilon_S}{\rho} \right).
\end{align}
Note that for the three cases, $\{z_1<u\}$ and $\{z_1>u, u<\frac{\epsilon_M}{\rho} \}$, and $\{z_1>u>\frac{\epsilon_M}{\rho} , z_1<2^{\tilde{R}_S}v\}$  $ \mathrm{P}\left( (z_1-2^{\tilde{R}_S}z_k) \alpha_{U}^2 <\frac{\epsilon_S}{\rho} \right) =1$. The three terms, $Q_4$, $Q_5$ and $Q_6$, are calculated in the following subsections, respectively.

\subsubsection{Calculating $Q_4$}
Note that the largest and smallest channel gains, $v$ and $u$, are correlated as follows:\footnote{Without loss of generality, we focus on the cases with $K>2$. }
\begin{align}\label{uv}
f_{u,v}(u,v) &= (K-1)(K-2)e^{-u-v}\left(e^{-u}-e^{-v}\right)^{K-3}\\ \nonumber &=  \sum^{K-3}_{m=0}\tau_me^{-(K-2-m)u}e^{-(m+1)v},
\end{align}
where $\tau_m=(K-1)(K-2){K-3 \choose m} (-1)^m $.

We first rewrite the term, $Q_4$, as follows:
\begin{align}
Q_4 =&  \underset{v>u, u>\frac{\epsilon_M}{\rho}}{\mathcal{E}}\left\{\mathrm{P}\left(z_1>2^{\tilde{R}_S}v, \right.\right.\\ \nonumber &\hspace{4em}\left.\left.   (z_1-2^{\tilde{R}_S}v)
   \left(1-\frac{\epsilon_{M}}{u\rho}\right)    <\xi \right)\right\}\\
    =&\nonumber  \underset{v> u>\frac{\epsilon_M}{\rho}}{\mathcal{E}}\left\{\mathrm{P}\left(  2^{\tilde{R}_S}v< z_1<    2^{\tilde{R}_S}v+\frac{\xi}{\left(1-\frac{\epsilon_{M}}{u\rho}\right)}
     \right)\right\},
\end{align}
since
$2^{\tilde{R}_S}v>v>u$,
where $\xi=\frac{\epsilon_S(1+\epsilon_{M})}{\rho}$. %It is important to point out that $u<v$. The inequality is due to the fact that we omit the constraint of $v$,   $v>u$.

Therefore, $Q_4$ can be rewritten as follows:
\begin{align}
Q_4
    =&\nonumber \frac{1}{(M-1)!}  \underset{v> u>\frac{\epsilon_M}{\rho}}{\mathcal{E}}\left\{  \gamma\left(2^{\tilde{R}_S}v+\frac{\xi}{\left(1-\frac{\epsilon_{M}}{u\rho}\right)}\right) \right.\\   &\left.- \gamma \left(2^{\tilde{R}_S}v\right)
     \right\}.\label{diff1}
\end{align}
By using the joint pdf of $u$ and $v$ in \eqref{uv}, the term can be expressed as follows:
\begin{align}
Q_4
    &=  \sum^{K-3}_{m=0}\frac{\tau_m}{(M-1)!} \int^{\infty}_{\frac{\epsilon_M}{\rho}}e^{-(m+1)v}\int^{v}_{\frac{\epsilon_M}{\rho}} e^{-(K-2-m)u}\\ \nonumber
    &\times \left( \gamma\left(M,2^{\tilde{R}_S}v+\frac{\xi}{\left(1-\frac{\epsilon_{M}}{u\rho}\right)}\right) - \gamma \left(M,2^{\tilde{R}_S}v\right)
     \right)dudv.
\end{align}

Define the following function:
\begin{align}
&G(u,v)=e^{-(K-2-m)u}\\\nonumber
&\times\left( \gamma\left(M,2^{\tilde{R}_S}v+\frac{\xi}{\left(1-\frac{\epsilon_{M}}{u\rho}\right)}\right) - \gamma \left(M,2^{\tilde{R}_S}v\right)\right).
\end{align}
 The application of the Chebyshev-Gauss approximation yields the following expression:
\begin{align}
Q_4
    \approx&  \sum^{K-3}_{m=0}\frac{\tau_m}{(M-1)!} \sum^{N_a}_{i=1}w_i\sqrt{1-x_i^2}\int^{\infty}_{\frac{\epsilon_M}{\rho}}e^{-(m+1)v}\\ \nonumber &\times \frac{v-\frac{\epsilon_M}{\rho}}{2} G\left(\frac{v-\frac{\epsilon_M}{\rho}}{2} x_i+\frac{v+\frac{\epsilon_M}{\rho}}{2},v\right)dv.
\end{align}
The remaining  integration can be further approximated by using the Chebyshev-Gauss approximation as follows:
\begin{align}
Q_4\label{q44}
    \approx &  \sum^{K-3}_{m=0}\frac{\tau_m}{(M-1)!} \sum^{N_a}_{i=1}w_i\sqrt{1-x_i^2}\int^{\frac{\rho}{\epsilon_M}}_0e^{-\frac{(m+1)}{y}}\\ \nonumber &\times \frac{\frac{1}{y}-\frac{\epsilon_M}{\rho}}{2} G\left(\frac{\frac{1}{y}-\frac{\epsilon_M}{\rho}}{2} x_i+\frac{\frac{1}{y}+\frac{\epsilon_M}{\rho}}{2},\frac{1}{y}\right)y^{-2}dy\\ \nonumber
    =&  \sum^{K-3}_{m=0}\frac{\tau_m}{(M-1)!} \sum^{N_a}_{i=1}w_i\sqrt{1-x_i^2}\sum^{N_a}_{j=1}\frac{w_j\rho}{2\epsilon_M} e^{-\frac{(m+1)}{\tilde{y}}}\\ \nonumber &\times \frac{\frac{1}{\tilde{y}}-\frac{\epsilon_M}{\rho}}{2} \left(\tilde{y}
    \right)^{-2}\sqrt{1-y_j^2}\\ \nonumber &\times G\left(\frac{\frac{1}{\tilde{y}}-\frac{\epsilon_M}{\rho}}{2} x_i+\frac{\frac{1}{\tilde{y}}+\frac{\epsilon_M}{\rho}}{2}, \frac{1}{\tilde{y}}\right),
\end{align}
where $w_j=\frac{\pi}{N_a}$, $\tilde{y}=\frac{\rho}{2\epsilon_M}y_j+\frac{\rho}{2\epsilon_M}$ and $y_j =\cos\left(\frac{2j-1}{2N_a}\pi\right)$

\subsubsection{Calculating $Q_6$}On the other hand, the third term in \eqref{qx3} can be found as follows:
\begin{align}\label{diff2}
Q_6 =&\underset{v>u>\frac{\epsilon_M}{\rho}}{\mathcal{E}}\left\{\mathrm{P}\left(u<z_1  <2^{\tilde{R}_S}v \right)\right\}
\\ \nonumber =&\frac{1}{(M-1)!}\underset{v>u>\frac{\epsilon_M}{\rho}}{\mathcal{E}}\left\{\gamma(M, 2^{\tilde{R}_S}v) - \gamma(M,u)\right\}.
\end{align}
Although  each component inside of the expectation is only a function of either $u$ or $v$, it is important to point out that this expectation cannot  be simply  evaluated  as follows:
\begin{align}
Q_6 \neq&\frac{1}{(M-1)!}\underset{v>\frac{\epsilon_M}{\rho}}{\mathcal{E}}\left\{\gamma(M, 2^{\tilde{R}_S}v)\right\} -\underset{u>\frac{\epsilon_M}{\rho}}{\mathcal{E}}\left\{ \gamma(M,u)\right\},
\end{align}
which is due to the implicit constraints that both $u$ and $v$ are larger than $\frac{\epsilon_M}{\rho}$.

Following steps similar to those for calculating   $Q_4$, we first define
\begin{align}
G_2(u,v)=& e^{-(K-2-m)u}\left( \gamma\left(M,2^{\tilde{R}_S}v\right) - \gamma \left(M,u\right)\right).
\end{align}
$Q_6$ can then be written as follows:
\begin{align}\label{q6}
Q_6
    \approx &   \sum^{K-3}_{m=0}\frac{\tau_m}{(M-1)!} \sum^{N_a}_{i=1}w_i\sqrt{1-x_i^2}\sum^{N_a}_{j}\frac{w_j\rho}{2\epsilon_M} e^{-\frac{(m+1)}{\tilde{y}}}\\ \nonumber &\times \frac{\frac{1}{\tilde{y}}-\frac{\epsilon_M}{\rho}}{2} \left(\tilde{y}
    \right)^{-2}\sqrt{1-y_j^2}\\ \nonumber &\times G_2\left(\frac{\frac{1}{\tilde{y}}-\frac{\epsilon_M}{\rho}}{2} x_i+\frac{\frac{1}{\tilde{y}}+\frac{\epsilon_M}{\rho}}{2}, \frac{1}{\tilde{y}}\right).
\end{align}
It is worth pointing out that the expression of $Q_6$ is quite similar to that of $Q_4$, which is due to the similarity between \eqref{diff1} and \eqref{diff2}.

\subsubsection{Calculating $Q_5$}
$Q_5$ is a sum of two probabilities as shown in the following:
\begin{align}
Q_5=&\mathrm{P}(z_1<u)+\mathrm{P}\left(z_1>u, u<\frac{\epsilon_M}{\rho}   \right)\\\nonumber
=&\mathrm{P}(z_1<u)+\mathrm{P}\left(z_1>u   \right)-\mathrm{P}\left(z_1>u, u>\frac{\epsilon_M}{\rho}   \right)
\\\nonumber
=&1-\mathrm{P}\left(z_1>u, u>\frac{\epsilon_M}{\rho}   \right).
\end{align}
Therefore,
\begin{align}\label{q5}
Q_5 =&1-    \frac{\Gamma(M, \frac{\epsilon_M}{\rho})}{(M-1)!}e^{-\frac{\epsilon_M(K-1)}{\rho}} +\frac{K^{-M}\Gamma(M, \frac{\epsilon_M K}{\rho})}{(M-1)!}.
\end{align}
By substituting \eqref{q44}, \eqref{q6} and \eqref{q5} into  \eqref{qx3}, an approximated expression for the secrecy outage probability is obtained.

{\it Remark 4:}
As can be seen from \eqref{qx3}, the secrecy outage probability consists of three parts. The term $Q_6$ is dominant, compared to $Q_4$ and $Q_5$, particularly at high SNR and when $M$ and $K$ are large. Specifically,  $Q_5$ can be approximated as follows:
\begin{align}
Q_5 \approx&  K^{-M},
\end{align}
 which is quite small when $K$ and $M$ are large. As shown in \eqref{diff1},
 $Q_4$ is related to the difference between  the following gamma functions: $ \gamma\left(2^{\tilde{R}_S}v+\frac{\xi}{\left(1-\frac{\epsilon_{M}}{u\rho}\right)}\right)$ and $ \gamma \left(2^{\tilde{R}_S}v\right)$. For fixed $u$ and $v$, increasing SNR can reduce the difference between the two functions, and hence reduce the value of $Q_4$. On the other hand, $Q_6$ is related to the difference between the two following functions: $\gamma(M, 2^{\tilde{R}_S}v)$ and $\gamma(M,u)$. When $K$ is large, this difference can be very large, which means a large value for $Q_6$. Note that this difference cannot be reduced by simply increasing the SNR.

{\it Remark 5:} The use of the user scheduling scheme described in Section \ref{subsection spec 2} is still helpful to reduce the secrecy outage probability, as explained in the following. Recall that the term $Q_6$ is dominant in the expression of the outage probability. By using the proposed user scheduling scheme, the channel gain of the user selected for unicasting will be very strong, which makes the event $z_1< 2^{\tilde{R}_Sv}$ less likely. As shown in \eqref{qx3},  this will  reduce the value of $Q_6$, and hence improve  the overall outage probability.

\section{Numerical Results}\label{section simulation}
In this section, the spectral efficiency and security performance of the proposed NOMA transmission scheme is demonstrated by using simulation results.
\begin{figure}[!htp]\vspace{-0.5em}
\begin{center} \subfigure[Unicasting Outage Rates -- $(1-\mathrm{P}^o)R_U$]{\label{cooperative NOMA_system1}
\includegraphics[width=0.40\textwidth]{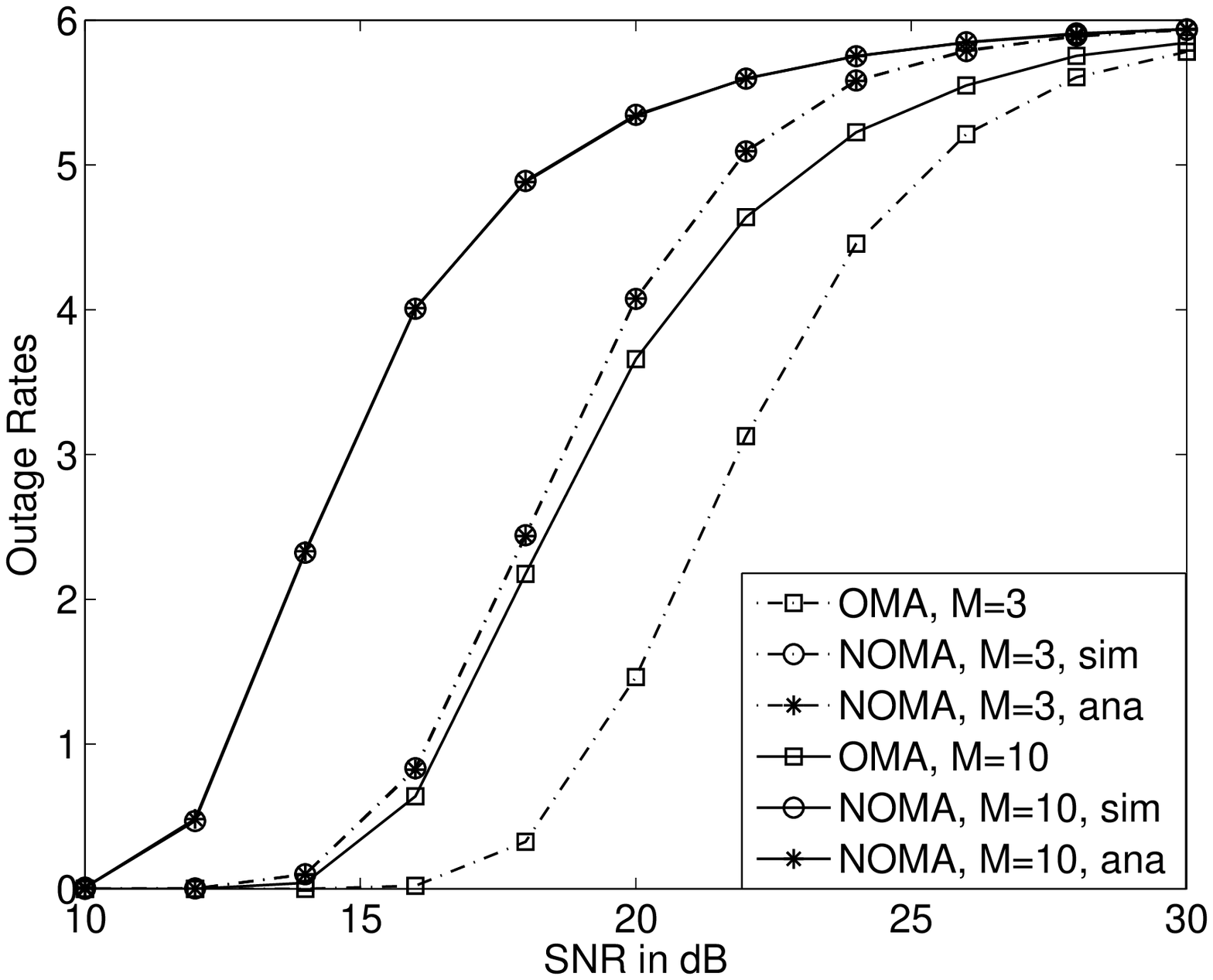}}
\subfigure[Outage probability for unicasting]{\label{cooperative NOMA_poor user2}\includegraphics[width=0.40\textwidth]{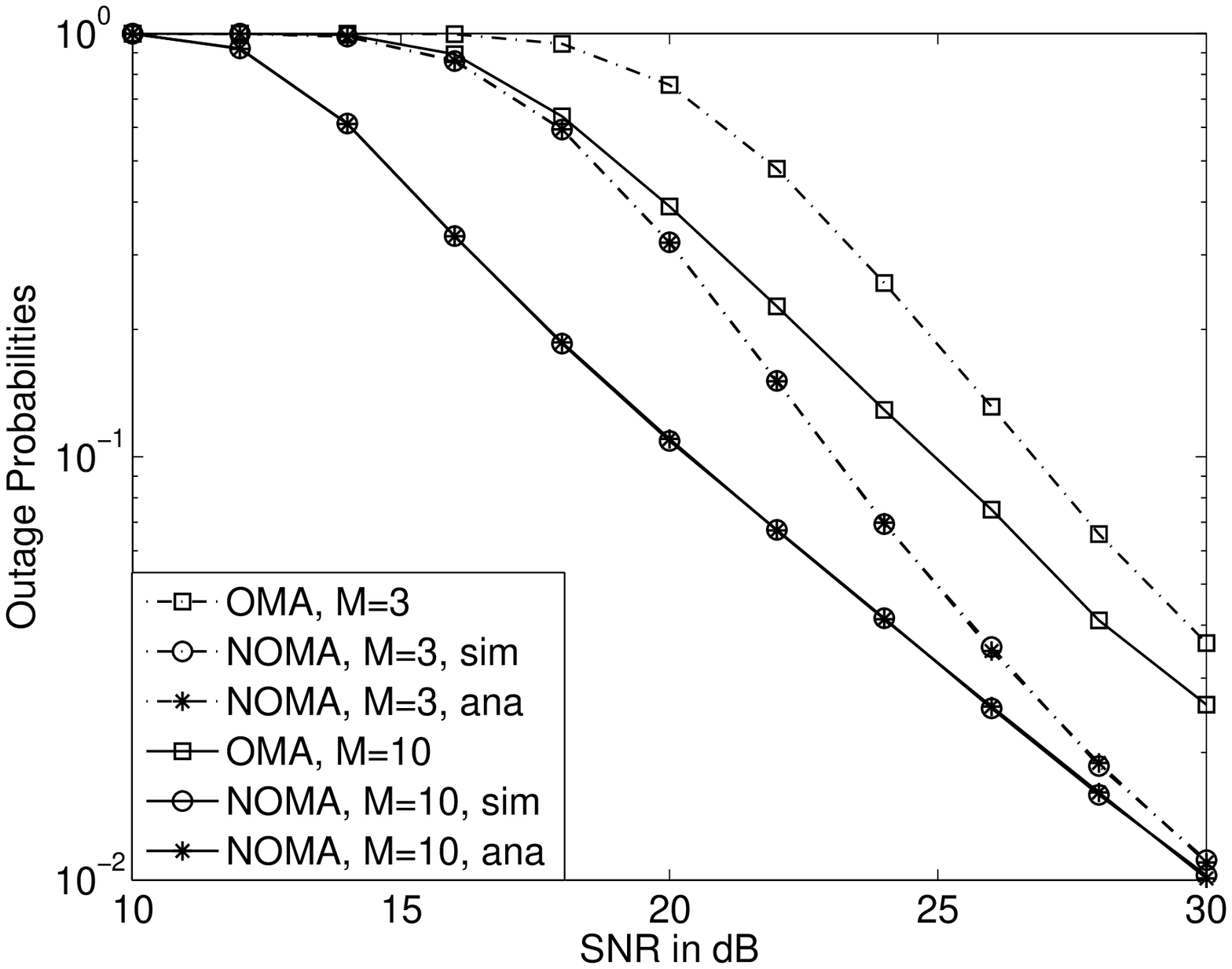}}\vspace{-1em}
\end{center}
  \caption{Performance comparison between the OMA and NOMA transmission schemes.  $K=11$ and $N_a=20$. The targeted data rates for multicasting and unicasting are $1$ and $6$ bits per channel use (BPCU), respectively.  }\vspace{-0.5em}
   \label{fig1}
\end{figure}

In Fig. \ref{fig1}, the unicasting outage probability and outage rate achieved by NOMA are compared with those of OMA. As can be seen from the figures, by using more antennas at the base station, both the outage rates and probabilities for  NOMA and OMA are improved. In addition, the figures show that the use of NOMA can significantly improve the unicasting rates, compared to OMA. For example, when the SNR is $16$dB and $M=10$, the use of NOMA can support a unicasting rate of $4$ bits per channel use (BPCU), whereas OMA can support a rate of $0.8$ BPCU only, i.e., the NOMA unicasting rate is nearly $5$ times the OMA rate. Similar performance gains in terms of the outage probability can also be observed from Fig. \ref{cooperative NOMA_poor user2}.

It is important to point out that such a significant gain is obtained without degrading the multicasting performance. Particularly,  NOMA realizes the same multicasting performance as OMA, as shown in Proposition \ref{propostion 1}.  In addition, both figures also demonstrate the accuracy of the developed analytical results, whereas the curves for the simulation results match perfectly those for the analytical results. In addition, Lemma \ref{lemma1} shows that a diversity gain of $1$ is achieved, no matter how many antennas the base station has. This is also confirmed by Fig. \ref{cooperative NOMA_poor user2}, since the slope of all the curves becomes the same at high SNR.

\begin{figure}[thb]
\begin{center}
\includegraphics[width=0.40\textwidth]{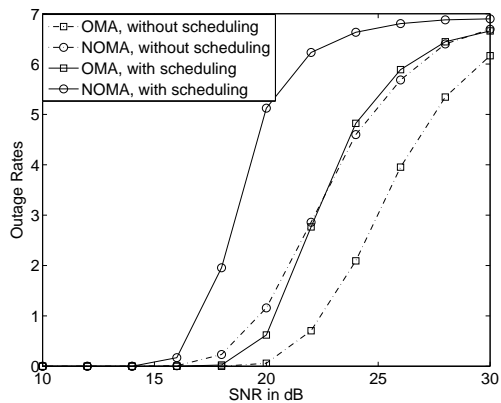}
\end{center}
\caption{The impact of scheduling on the performance of unicasting.  $M=2$ and $K=11$. The targeted data rates for multicasting and unicasting are $1$ and $7$ BPCU, respectively. }
        \label{Fig2}
\end{figure}

In Section \ref{subsection spec 2}, a user scheduling scheme was proposed in order to further improve the performance gap between the NOMA and OMA transmission schemes. This performance enhancement due to the use of user scheduling can be clearly observed in Fig. \ref{Fig2}. Particularly, the use of user scheduling can bring performance improvements to both NOMA and OMA, but  NOMA benefits more from user scheduling  than OMA.  For example, the outage rate curve for NOMA is shifted to the left nearly $4$dB, whereas the one for OMA is shifted to the left around $2$dB. In this paper, we have used the same beamforming for both the NOMA and OMA modes. As discussed in Section \ref{subsection beamforming}, one can also use random  beamforming or equal gain combining based beamforming, instead of the choice shown in \eqref{bemforming}. Fig. \ref{Fig3} demonstrates that the difference between the OMA schemes with different beamforming is insignificant, and the use of the choice in \eqref{bemforming} offers a slight performance gain over the others.

\begin{figure}[thb]
\begin{center}
\includegraphics[width=0.40\textwidth]{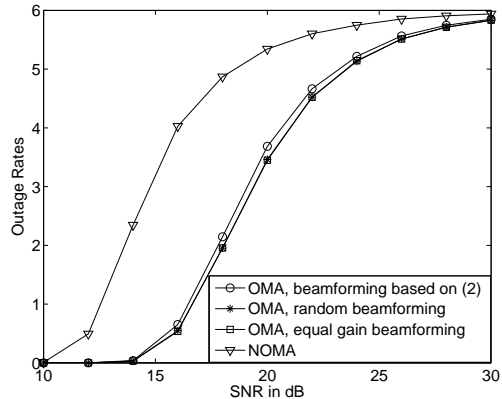}
\end{center}
\caption{The impact of scheduling on the performance of unicasting.  $M=10$ and $K=11$. The targeted data rates for multicasting and unicasting are $1$ and $6$ BPCU, respectively.}
        \label{Fig3}
\end{figure}

\begin{figure}[!htp]\vspace{-1em}
\begin{center} \subfigure[Secrecy outage Rates -- $(1-\mathrm{P}^o)\tilde{R}_S$]{\label{cooperative NOMA_system3}
\includegraphics[width=0.40\textwidth]{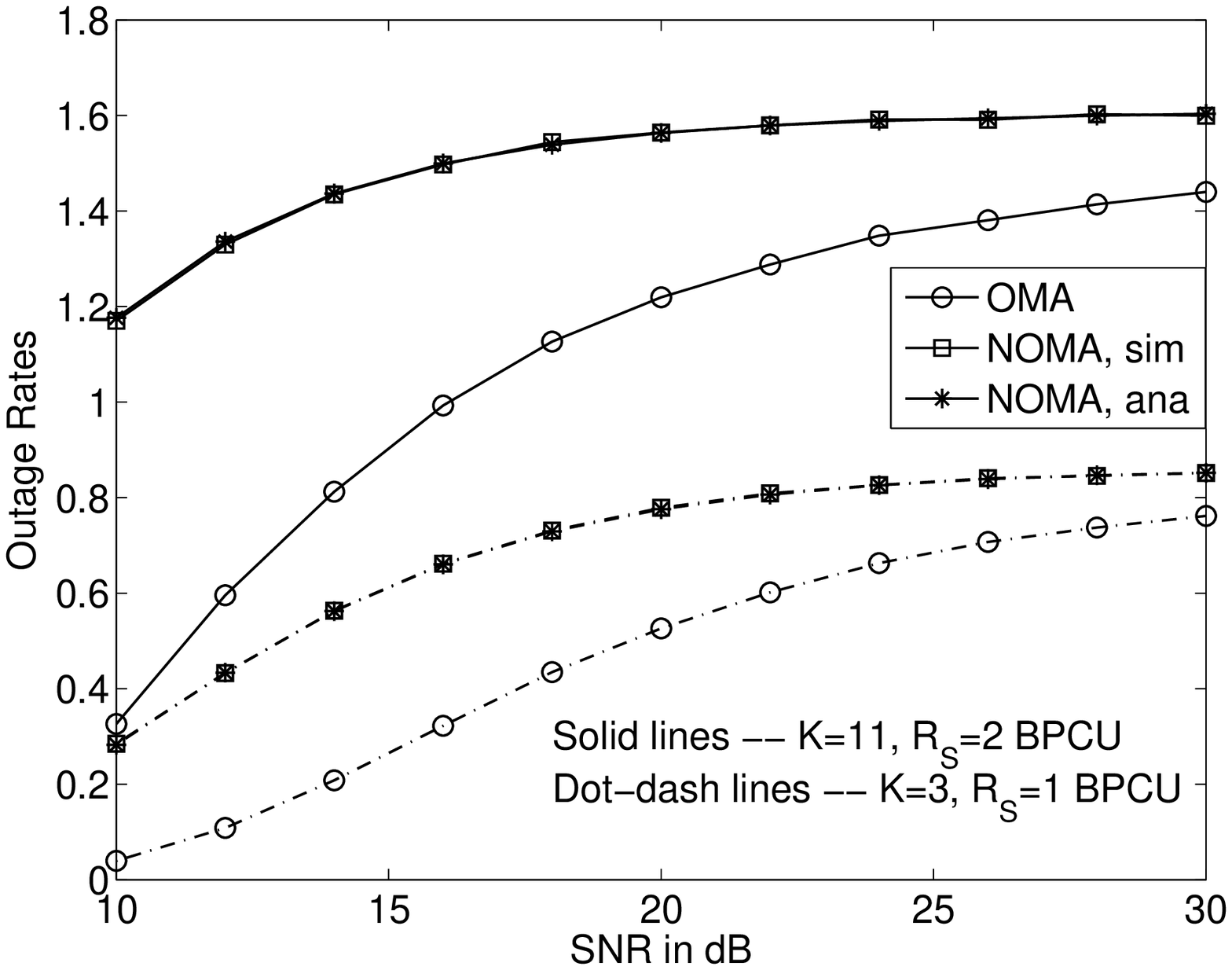}}
\subfigure[Outage probability for secrecy unicasting]{\label{cooperative NOMA_poor user4}\includegraphics[width=0.40\textwidth]{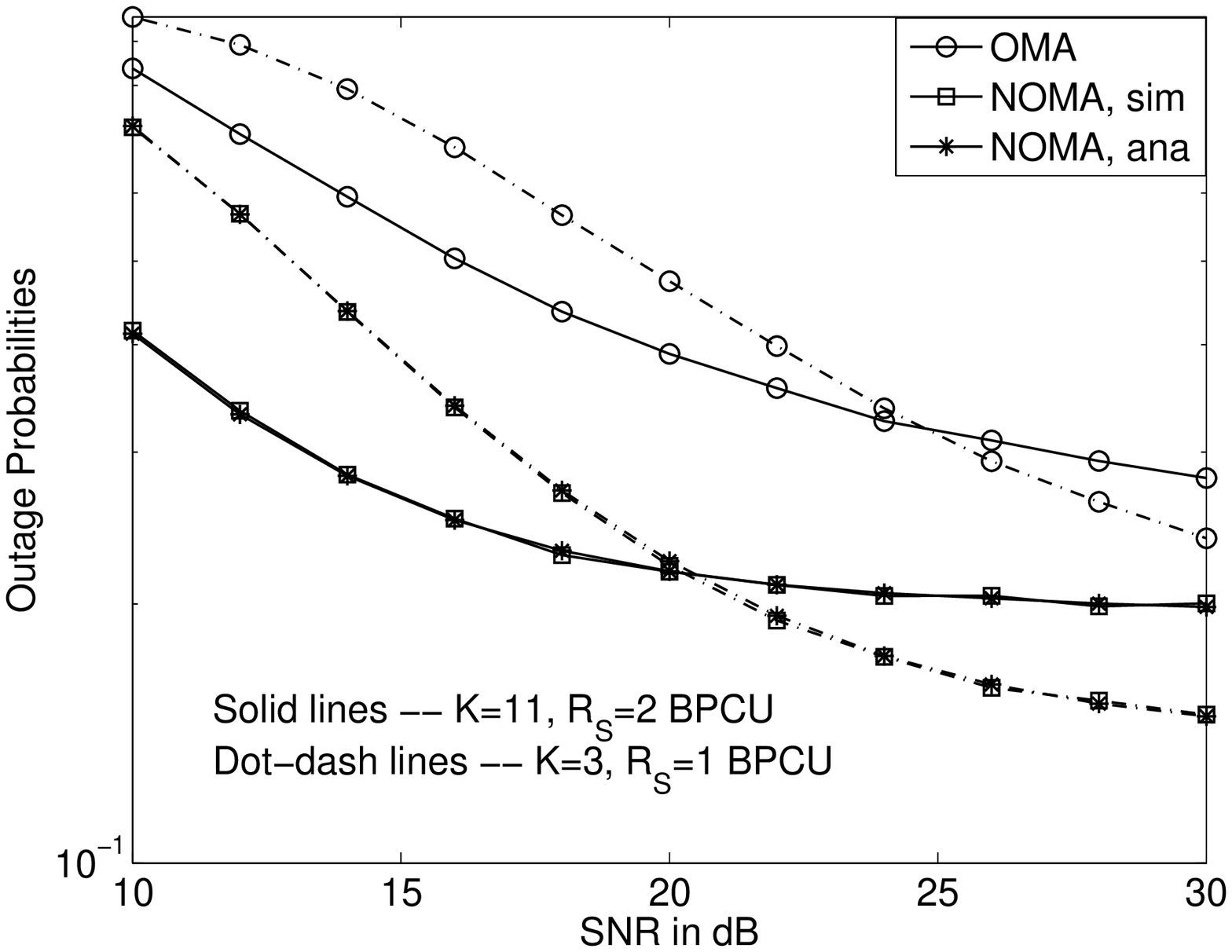}}\vspace{-1em}
\end{center}
  \caption{Secrecy performance comparison between the OMA and NOMA transmission schemes.  $M=10$ and $N_a=500$. The targeted data rate for multicasting is $1$ BPCU.  }\vspace{-1em}
   \label{fig4}
\end{figure}

\begin{figure}[thb]
\begin{center}
\includegraphics[width=0.40\textwidth]{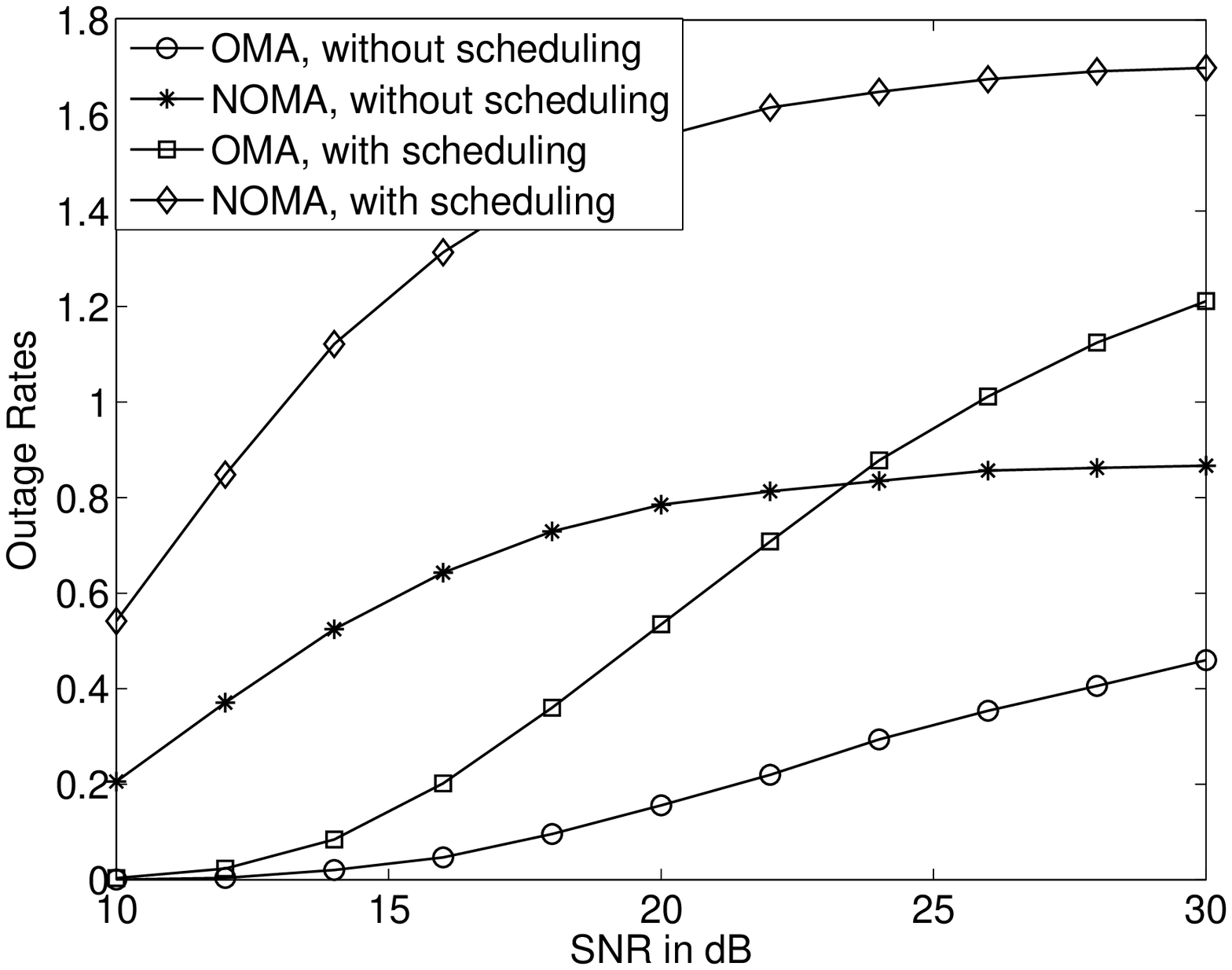}
\end{center}
\caption{The impact of scheduling on the performance of secrecy unicasting.  $M=10$ and $K=11$. The targeted data rates for multicasting and secrecy unicasting are $1$ and $2$ BPCU, respectively. }
        \label{Fig5}
\end{figure}

 In Fig. \ref{fig4}, the secrecy performance of NOMA unicasting is demonstrated by using OMA as a benchmarking scheme. The simulation results shown in Fig. \ref{fig4} are consistent to the analytical results developed in Section \ref{section security}. For example, Theorem \ref{theorem2} shows that the secrecy unicasting  rate of NOMA is always larger than or equal to that of OMA, which is confirmed by Fig. \ref{fig4}. Particularly, the rate performance gain of NOMA is   clearly demonstrated, e.g., a secrecy rate of $1.2$ BPCU can be achieved by NOMA at a SNR of $10$dB, and this is significantly larger than $0.3$ BPCU, a rate achieved by OMA. In addition, the curves for the simulation results match those for the analytical results, which verifies the accuracy of the developed analytical results. In Fig. \ref{Fig5}, the impact of user scheduling on the performance of secrecy unicasting is demonstrated. As can be observed from the figure, the use of user scheduling can significantly enlarge the performance gap between NOMA and OMA. For example, when the SNR is $20$dB, the performance gap between NOMA and OMA is $0.6$ BPCU without using the user scheduling scheme, and this gap can be increased to $1.1$ BPCU when the user scheduling scheme is applied.

\section{Conclusions}
 In this paper, the application of NOMA to a multi-user network with mixed multicast and unicast traffic has been considered. Beamforming and power allocation coefficients have been jointly designed to ensure that the unicasting performance is improved while maintaining the reception reliability of multicasting. Both analytical and simulation results have been developed  to demonstrate that the use of the NOMA assisted multicast-unicast scheme yields a significant improvement in spectral efficiency compared to orthogonal multiple access (OMA) cases, in which multicasting and unicasting are realized separately. Since the unicasting message is broadcasted to all the users, how well the use of NOMA can prevent those multicasting receivers intercepting the unicasting message has  also been investigated, where it is shown that the secrecy unicasting rate achieved by NOMA is always larger than or equal to that of OMA.

 \bibliographystyle{IEEEtran}
\bibliography{IEEEfull,trasfer}
  \end{document}